\documentclass[10pt,letterpaper]{IEEEtran}

\usepackage{ucs}
\usepackage[utf8x]{inputenc}
\usepackage{cite}

\usepackage{tikz} 
\usetikzlibrary{shapes,arrows,calc,intersections,positioning,arrows.meta}
\usepackage{pgfplots}
\usepgfplotslibrary{statistics}
\usepackage{pgfplotstable}
\usepackage{tumcolor}

\usepackage[cmex10]{amsmath}
\usepackage{amssymb}
\usepackage{amsthm}
\usepackage{noindentafter}
\usepackage{bm}
\usepackage{fixmath}
\usepackage{mathtools}
\usepackage{amsbsy}
\usepackage{algorithmicx}
\usepackage{algpseudocode}
\usepackage{algorithm}
\usepackage{booktabs}
\usepackage{siunitx}

\usepackage{array}

\pgfplotsset{compat=1.13}

\usepackage[caption=false,font=footnotesize]{subfig}

\usepackage{relsize}
\usepackage{dane}

\begin{document}

\title{Learning the MMSE Channel Estimator}

\author{David Neumann,
        Thomas~Wiese,~\IEEEmembership{Student Member,~IEEE,}
        and~Wolfgang~Utschick,~\IEEEmembership{Senior Member,~IEEE}
        \thanks{Copyright (c) 2018 IEEE. Personal use of this
material is permitted. However, permission to use this material for
any other purposes must be obtained from the IEEE by sending a request
to pubs-permissions@ieee.org.}
        \thanks{The authors are with the Professur f\"ur Methoden der Signalverarbeitung, Technische Universit\"at M\"unchen, 80290 M\"unchen, Germany (email: \{d.neumann, thomas.wiese, utschick\}@tum.de).}
\thanks{A shorter version of this paper was presented at the 21st International ITG Workshop on Smart Antennas (WSA), Berlin, Germany, 2017}%
}

\markboth{}%
{}

\maketitle

\begin{abstract}
We present a method for estimating conditionally Gaussian random vectors with random covariance matrices, which uses techniques from the field of machine learning.
Such models are typical in communication systems, where the covariance matrix of the channel vector depends on random parameters, e.g., angles of propagation paths.
If the covariance matrices exhibit certain Toeplitz and shift-invariance structures, the complexity of the MMSE channel estimator can be reduced to $\order(M\log M)$ floating point operations, where $M$ is the channel dimension.
While in the absence of structure the complexity is much higher, we obtain a similarly efficient (but suboptimal) estimator by using the MMSE estimator of the structured model as a blueprint for the architecture of a neural network.
This network learns the MMSE estimator for the unstructured model, but only within the given class of estimators that contains the MMSE estimator for the structured model.
Numerical simulations with typical spatial channel models demonstrate the generalization properties of the chosen class of estimators to realistic channel models.
\end{abstract}

\begin{IEEEkeywords}
channel estimation; MMSE estimation; machine learning; neural networks; spatial channel model
\end{IEEEkeywords}

\IEEEpeerreviewmaketitle

\section{Introduction}

Accurate channel estimation is a major challenge in the next
generation of wireless communication networks, e.g., in cellular
massive
MIMO~\cite{rusek_scaling_2013,marzetta_noncooperative_2010}
or millimeter-wave~\cite{Mota14,Heath14} networks.
In setups with many antennas and low signal to noise ratios (SNRs), errors in the channel estimates are particularly devastating, because the array gain cannot be fully realized.
Since a large array gain is essential in such setups, there is currently a lot of research going on concerning the modeling and verification of massive MIMO and/or millimeter wave channels~\cite{Rappaport16,Sun16} and the question how these models can aid channel estimation~\cite{yin_coordinated_2013}.

For complicated stochastic models, the minimum mean squared error (MMSE) estimates of the channel cannot be calculated in closed form.
A common strategy to obtain computable estimators is to restrict the estimator to a certain class of functions and then find the best estimator in that class.
For example, we could restrict the estimator to the class of linear operators.
The linear MMSE (LMMSE) estimator is then represented by the optimal linear operator, i.e., the linear operator that minimizes the mean squared error (MSE).
In some special cases, the matrix that represents the optimal linear estimator can be calculated in closed form; in other cases, it has to be calculated numerically.
We know that the LMMSE estimator is the MMSE estimator for jointly Gaussian distributed random variables.
Nonetheless, it is often used in non-Gaussian settings and performs well for all kinds of distributions that are not too different from a Gaussian.

In the same spirit, we present a class of low-complexity channel estimators, which contain a convolutional neural network (CNN) as their core component.
These CNN-estimators are composed of convolutions and some simple nonlinear operations.
The CNN-MMSE estimator is then the CNN-estimator with optimal convolution kernels such that the resulting estimator minimizes the MSE.
These optimal kernels have to be calculated numerically, and this procedure is called \emph{learning}.
Just as the LMMSE estimator is optimal for jointly Gaussian random variables, the CNN-MMSE estimator is optimal for a specific idealized channel model (essentially a single-path model as described by the ETSI 3rd Generation Partnership Project (3GPP)~\cite{3gpp}).
In numerical simulations, we find that the CNN-MMSE estimator works fine for the channel models proposed by the 3GPP, even though these violate the assumptions under which the CNN-MMSE estimator is optimal.

Once we have learned the CNN-MMSE estimator from real or simulated channel realizations, the computational complexity required to calculate a channel estimate is only $\order(M\log M)$ floating point operations (FLOPS).
Despite this low complexity, the performance of the CNN-MMSE estimator does not trail far behind that of the unrestricted MMSE estimator, which is very complex to compute.
Since the learning procedure is performed off-line, it does not add to the complexity of the estimator.

One assumption of the idealized channel model mentioned above is that the covariance matrices have Toeplitz structure.
This assumption has also motivated other researchers to propose estimators that exploit this structure.
For example, in~\cite{Heath14} and \cite{Rao14}, methods from the area of compressive sensing are used to approximate the channel vector as a linear combination of $k$ steering vectors where $k$ is much smaller than the number of antennas $M$.
With these methods, a complexity of $\order(M\log M)$ floating point operations can be achieved if efficient implementations are used.
Although of similar complexity, the proposed CNN-MMSE estimator significantly outperforms the compressive-sensing-based estimators in our simulations.

In~\cite{Haghighatshoar17}, the maximum likelihood estimator of the channel covariance matrix within the class of all positive semi-definite Toeplitz matrices is constructed.
This estimated covariance matrix is then used to estimate the actual channel.
However, even the low-complexity version of this covariance matrix estimator relies on the solution of a convex program with $M$ variables, i.e., its complexity is polynomial in the number of antennas.
There also exists previous work on learning-based channel estimation~\cite{omri2010channel,zhang2007mimo,prasad_joint_2014,zhou_channel_2003}, but with completely different focus in terms of system model and estimator design. 

In summary, our main contributions are the following:
\begin{itemize}
    \item We derive the MMSE channel estimator for conditionally normal channel models, i.e., the channel is normally distributed given a set of parameters, which are also modelled as random variables.
    \item We show how the complexity of the MMSE estimator can be reduced to $\order(M\log M)$ if the channel covariance matrices are Toeplitz and have a shift-invariance structure.
    \item We use the structure of this MMSE estimator to define the CNN estimators, which have $\order(M\log M)$ complexity.
    \item We describe how the variables of the neural network can be optimized/learned for a general channel model using stochastic gradient methods.
    \item We introduce a hierarchical learning algorithm, which helps to avoid local optima during the learning procedure.
\end{itemize}

\subsection{Notation}

The transpose and conjugate transpose of $\vX$ are denoted by $\vX\tp$ and $\vX\he$, respectively.
The trace of a square matrix $\vX$ is denoted by $\trace(\vX)$ and the Frobenius norm of $\vX$ is denoted by $\norm{\vX}_F$.
Two matrices $\vA, \vB \in \C^{M\times M}$ are asymptotically equivalent, which we denote by $\vA\asymp\vB$, if
\begin{align}
    \lim_{M\rightarrow \infty} \norm{\vA - \vB}_F^2/M = 0.
\end{align}
We write $\exp(\vx)$ and $\abs{\vx}^2$ to denote element-wise application of $\exp(\cdot)$ and $\abs{\cdot}^2$ to the elements of $\vx$.
The $k$th entry of the vector $\vx$ is denoted by $[\vx]_k$; similarly, for a matrix $\vX$, we write $[\vX]_{mn}$ to denote the entry in the $m$th row and $n$th column.
The circular convolution of two vectors $\va,\vb \in \C^M$ is denoted by $\va \ast \vb \in \C^M$.
Finally, $\diag(\vx)$ denotes the square matrix with the entries of the vector $\vx$ on its diagonal and $\vec(\vX)$ is the vector obtained by stacking all columns of the matrix $\vX$ into a single vector.

\section{Conditionally Normal Channels}

We consider a base station with $M$ antennas, which receives uplink training signals from a single-antenna terminal.
We assume a frequency-flat, block-fading channel, i.e., we get independent observations in each coherence interval.
After correlating the received training signals with the pilot sequence transmitted by the mobile terminal, we get observations of the form
\begin{equation}
    \vy_t = \vh_t + \vz_t,\quad t=1,\ldots,T
\end{equation}
with the channel vectors $\vh_t$ and additive Gaussian noise $\vz_t \sim \CN(\zeros,\cz)$.
For the major part of this work, we assume that the noise covariance is a scaled identity $\cz = \sigma^2 \id$ with \emph{known} $\sigma^2$.
The channel vectors are assumed to be conditionally Gaussian distributed given a set of parameters $\vdelta$, i.e., $\vh_t\vert\vdelta \sim \CN(\zeros,\vC_\vdelta)$.
In contrast to the fast-fading channel vectors, the covariance matrix $\vC_\vdelta$ is assumed to be constant over the $T$ channel coherence intervals.
That is, $T$ denotes the coherence interval of the covariance matrix in number of channel coherence intervals.

The parameters, which describe, for example, angles of propagation paths, are also considered as random variables with distribution $\vdelta \sim p(\vdelta)$, which is known.
In summary, we have
\begin{equation}\label{eq:modelbayes1}
    \vy_t \vert \vh_t \sim \CN(\vh_t, \cz)
\end{equation}
with \emph{known} noise covariance matrix $\cz$ and hierarchical prior
\begin{equation}\label{eq:modelbayes2}
    \vh_t \vert \vdelta \sim \CN(\zeros,\vC_\vdelta)\,,\quad \vdelta \sim p(\vdelta)\,.
\end{equation}

\textbf{Example.}
Conditionally normal channels appear in typical channel models for communication scenarios, e.g., in those defined by the 3GPP for cellular networks~\cite{3gpp}.
There, the covariance matrices are of the form
\begin{equation}\label{eq:covmodel}
    \vC_\vdelta = \int_{-\pi}^{\pi} g(\theta;\vdelta)\va(\theta)\va(\theta)\he d\theta\,,
\end{equation}
where $g(\theta;\vdelta)\geq 0$ is a power density function
corresponding to the parameters $\vdelta$ and where $\va(\theta)$ denotes the array manifold vector of the antenna array at the base station for an angle $\theta$.
As an example, in the 3GPP urban micro and urban macro scenarios, $g(\theta;\vdelta)$
is a superposition of several scaled probability density functions of a
Laplace-distributed random variable with standard deviations $2^\circ$
and $5^\circ$, respectively.
The Laplace density models the scattering of the received power around the center of the propagation path.
Its standard deviation is denoted as the per-path angular spread.

\section{MMSE Channel Estimation}\label{sec:mmse}

Our goal is to estimate $\vh_t$ for each $t$ given all observations $\vY = \big[\vy_1,\ldots,\vy_T\big]$ and knowledge of the model~\eqref{eq:modelbayes1}, \eqref{eq:modelbayes2}.
For a fixed parameter $\vdelta$, the MMSE estimator can be given analytically, since conditioned on $\vdelta$, the observation $\vy_t$ is jointly Gaussian distributed with the channel vector $\vh_t$.
Also, given the parameters $\vdelta$, the observations of different coherence intervals are independent.
The conditional MMSE estimate of the channel vector $\vh_t$ is~\cite{yin_coordinated_2013,kailath_linear_1980}
\begin{equation}\label{eq:cond_mmse_estimator1}
  \expec[\vh_t \vert \vY, \vdelta] = \vW_\vdelta \vy_t
\end{equation}
with
\begin{equation}\label{eq:cond_mmse_estimator2}
    \vW_\vdelta = \vC_\vdelta (\vC_\vdelta + \cz)\inv.
\end{equation}
Given the parameters $\vdelta$, the estimate of $\vh_t$ only depends on $\vy_t$.
Since the parameters $\vdelta$ and, thus, the covariance matrix $\vC_\vdelta$ are unknown random variables, the MMSE estimator for our system model is given by
\begin{align}
    \hest_t
    &= \expec[\vh_t \,\vert\, \vY] \\
    &= \expec[\,\expec[\vh_t \,\vert\, \vY, \vdelta] \,\vert\, \vY] \label{eq:total_exp}\\
    &= \expec[\vW_\vdelta \vy_t \,\vert\, \vY ] \\
    &= \expec[\vW_\vdelta \,\vert\, \vY] \vy_t \label{eq:factorized_est}\\
    &= \Wmmse(\vY)\; \vy_t. \label{eq:mmse_estimation}
\end{align}
where we use the law of total expectation and~\eqref{eq:cond_mmse_estimator1} to get the final result.
We note that the observations are filtered by the MMSE estimate $\Wmmse$ of the filter $\vW_\vdelta$.
Hence, the main difficulty of the non-linear channel estimation lies with the calculation of $\Wmmse$ from the observations $\vY$.

Using Bayes' theorem to express the posterior distribution of $\vdelta$ as
\begin{equation}
p(\vdelta \vert \vY ) = \frac{p(\vY | \vdelta)p(\vdelta)}{\int p(\vY |
  \vdelta)p(\vdelta) d\vdelta}
\end{equation}
we can write the MMSE filter as
\begin{equation}\label{eq:mmseestimator}
    \Wmmse = \int p(\vdelta\vert\vY) \vW_\vdelta d\vdelta =\frac{\int p(\vY | \vdelta) \vW_\vdelta \,p(\vdelta)d\vdelta}{\int p(\vY|\vdelta)\,p(\vdelta)d\vdelta}\,.
\end{equation}
The MMSE estimation in~\eqref{eq:mmse_estimation}, \eqref{eq:mmseestimator} can be interpreted as follows.
We first calculate $\Wmmse$ as a convex combination of filters $\vW_\vdelta$ with weights $p(\vdelta\vert\vY)$ for known covariance matrices $\vC_\vdelta$ and then apply the resulting filter $\Wmmse$ to the observation.

By manipulating $p(\vY\vert\vdelta)$, we obtain the following expression for the MMSE filter, which shows that $\Wmmse$ depends on $\vY$ only through the scaled sample covariance matrix
\begin{equation}\label{eq:samplecov}
\widehat\vC = \frac{1}{\sigma^2}\sum_{t=1}^T \vy_t\vy_t\he.
\end{equation}
\begin{lemma}\label{lem:mmseestimator}
    If the noise covariance matrix is $\cz = \sigma^2 \id$, the MMSE filter $\Wmmse$ in~\eqref{eq:mmseestimator} can be calculated as
\begin{equation}\label{eq:mmseestimator_explicit}
    \Wmmse(\widehat\vC) = \frac{\int \exp\big( \tr(\vW_\vdelta\widehat\vC)
      + T\log\lvert\id -\vW_\vdelta\rvert \big)\vW_\vdelta \,p(\vdelta)d\vdelta}
  {\int\exp\big( \tr(\vW_\vdelta\widehat \vC) + T\log|\id - \vW_\vdelta| \big)\,p(\vdelta)d\vdelta}.
\end{equation}
with $\widehat \vC$ given by~\eqref{eq:samplecov} and $\vW_\vdelta$ given by~\eqref{eq:cond_mmse_estimator2}.
\end{lemma}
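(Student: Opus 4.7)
The plan is to substitute the conditional likelihood $p(\vY\vert\vdelta)$ directly into~\eqref{eq:mmseestimator} and simplify, exploiting the isotropic noise assumption $\cz = \sigma^2 \id$. Since the observations are conditionally independent given $\vdelta$ with $\vy_t\vert\vdelta \sim \CN(\zeros,\vC_\vdelta + \sigma^2\id)$, the joint likelihood is
\begin{equation*}
p(\vY\vert\vdelta) = \prod_{t=1}^T \frac{1}{\pi^M \lvert\vC_\vdelta + \sigma^2\id\rvert}\, \exp\bigl(-\vy_t\he(\vC_\vdelta + \sigma^2\id)\inv \vy_t\bigr).
\end{equation*}

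The first step is to rewrite the sum of quadratic forms in the exponent using the cyclic property of the trace, obtaining $\sum_{t=1}^T \vy_t\he(\vC_\vdelta + \sigma^2\id)\inv \vy_t = \tr\bigl((\vC_\vdelta + \sigma^2\id)\inv \sum_t \vy_t\vy_t\he\bigr) = \sigma^2\, \tr\bigl((\vC_\vdelta + \sigma^2\id)\inv \widehat\vC\bigr)$ by the definition of $\widehat\vC$ in~\eqref{eq:samplecov}. This alone shows that the likelihood depends on $\vY$ only through $\widehat\vC$, which is the sufficient-statistic observation in the lemma.

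Next I would eliminate $(\vC_\vdelta + \sigma^2\id)\inv$ in favor of $\vW_\vdelta$ by means of the identity $\id - \vW_\vdelta = \sigma^2 (\vC_\vdelta + \sigma^2\id)\inv$, which follows immediately from~\eqref{eq:cond_mmse_estimator2}. This yields two substitutions: $\sigma^2\, \tr\bigl((\vC_\vdelta + \sigma^2\id)\inv \widehat\vC\bigr) = \tr(\widehat\vC) - \tr(\vW_\vdelta \widehat\vC)$ and, after taking determinants, $\lvert\vC_\vdelta + \sigma^2\id\rvert^{-T} = (\sigma^2)^{-MT}\lvert\id - \vW_\vdelta\rvert^{T}$. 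Rewriting $\lvert\id - \vW_\vdelta\rvert^T = \exp(T\log\lvert\id - \vW_\vdelta\rvert)$ then combines everything into $p(\vY\vert\vdelta) = C(\widehat\vC)\,\exp\bigl(\tr(\vW_\vdelta\widehat\vC) + T\log\lvert\id - \vW_\vdelta\rvert\bigr)$, with $C(\widehat\vC) = \pi^{-MT}(\sigma^2)^{-MT}\exp(-\tr(\widehat\vC))$ a strictly positive factor independent of $\vdelta$.

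Plugging this expression into the ratio form of $\Wmmse$ in~\eqref{eq:mmseestimator} causes $C(\widehat\vC)$ to cancel between numerator and denominator, producing exactly~\eqref{eq:mmseestimator_explicit}. I do not foresee any real obstacle: the argument is bookkeeping whose only nontrivial ingredient is the identity $\id - \vW_\vdelta = \sigma^2 (\vC_\vdelta + \sigma^2\id)\inv$, which hinges critically on $\cz$ being a scalar multiple of the identity and which is precisely why the lemma is stated only for this case.
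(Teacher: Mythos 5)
Your proof is correct and follows essentially the same route as the paper: substitute the Gaussian likelihood into~\eqref{eq:mmseestimator}, rewrite the quadratic forms as a trace against the (scaled) sample covariance, and use the identity $(\vC_\vdelta+\cz)\inv = \cz\inv(\id-\vW_\vdelta)$ so that all $\vdelta$-independent factors cancel in the ratio. The only cosmetic difference is that the paper carries out the manipulation for a general full-rank $\cz$ and specializes to $\cz=\sigma^2\id$ at the end (note the identity itself does not require $\cz$ to be a scaled identity; only the final reduction to the statistic $\widehat\vC$ does), whereas you work with $\cz=\sigma^2\id$ throughout.
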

\begin{proof}
    See Appendix~\ref{app:lemma1}.
\end{proof}
Note that the scaled sample covariance matrix $\widehat \vC$ is a sufficient statistic to calculate the MMSE filter $\Wmmse$.
Moreover, if we define $\Hest = [\hest_1, \ldots, \hest_T]$, we see from $\Hest = \Wmmse(\widehat \vC)\; \vY$ that we use all data to construct the sample covariance matrix $\widehat\vC$ and the filter $\vW_\vdelta$ and then apply the resulting filter to each observation individually to calculate the channel estimate.
This structure is beneficial for applications in which we are only interested in the estimate of the most recent channel vector.
In such a case, we can apply an adaptive method to track the scaled sample covariance matrix.
That is, given the most recent observation $\vy$, we apply the update
\begin{align}
    \widehat\vC \assign \alpha \widehat\vC + \beta\vy\vy\he
\end{align}
with suitable $\alpha,\beta > 0$ and then calculate the channel estimate
\begin{align}
    \hest = \Wmmse(\widehat\vC)\; \vy.
\end{align}

\section{MMSE Estimation and Neural Networks}
\label{sec:mmse_learning}

For arbitrary prior distributions $p(\vdelta)$, the MMSE filter as given by Lemma~\ref{lem:mmseestimator} cannot be evaluated in closed form.
To make the filter computable, we need the following assumption.
\begin{assumption}\label{assumption:discrete_prior}
The prior $p(\vdelta)$ is discrete and uniform, i.e., we have a grid $\{\vdelta_i : i=1,\ldots,N\}$ of possible values for $\vdelta$ and
\begin{equation}
    p(\vdelta_i) = \frac{1}{N}\,, \quad \forall i=1,\ldots,N.
\end{equation}
\end{assumption}
Under this assumption, we can evaluate the MMSE estimator of $\vW_\vdelta$ as
\begin{equation}\label{eq:mmseestimator_grid}
    \Wgrid (\widehat\vC)
    = \frac{\frac{1}{N} \sum_{i=1}^N \exp\big( \tr(\vW_{\vdelta_i}\widehat\vC)
+ b_i \big)\vW_{\vdelta_i}}{ \frac{1}{N}\sum_{i=1}^N \exp\big(\tr(\vW_{\vdelta_i}\widehat\vC) + b_i \big) }
\end{equation}
where $\vW_{\vdelta_i}$ is obtained by evaluating~\eqref{eq:cond_mmse_estimator2} for $\vdelta=\vdelta_i$ and
\begin{equation}\label{eq:bdelta}
    b_i = T\log\lvert\id - \vW_{\vdelta_i}\rvert.
\end{equation}

If Assumption~\ref{assumption:discrete_prior} does not hold, e.g., if $p(\vdelta)$ describes a continuous distribution, expression~\eqref{eq:mmseestimator_grid} is only approximately true if the grid points $\vdelta_i$ are chosen as random samples from $p(\vdelta)$.
In this case, the estimator~\eqref{eq:mmseestimator_grid} is a heuristic, suboptimal estimator, which neglects that the true distribution of $\vdelta$ is continuous.
We refer to this estimator as \emph{gridded estimator} (GE).
By the law of large numbers, the approximation error vanishes as the number of samples $N$ is increased, but this also increases the complexity of the channel estimation.

We can improve the performance of the estimator for a fixed $N$ by interpreting $\vW_{\vdelta_i}$ and $\vb_i$ as variables that can be optimized instead of using the values in~\eqref{eq:cond_mmse_estimator2} and~\eqref{eq:bdelta}.
This is the idea underlying the learning-based approaches, which we describe in the following.

\begin{figure}[t]
    \centering
    \begin{tikzpicture}[
            blubb/.style = {regular polygon, regular polygon sides=3, fill=white, thick,scale=1.5, rotate=30},
            darrow/.style = {implies-, double distance between line centers=5pt,thick},
            dline/.style = {double distance between line centers=5pt,thick},
            sarrow/.style = {stealth-,thick},
            dspadder/.style = { circle,fill=white,draw,inner sep=0.5mm},
        ]
	\def\xdist{1.1}
	\def\ydist{1.0}
        \coordinate(y1) at (0,0);
        \draw[darrow] (y1) node[right]{$\vec(\Wgrid)$} -++ (-\xdist+0.5,0) node[convbox,left,name=V]{$\vA_\text{GE}$};
        \draw[darrow] (V) -++ (-\xdist,0) node[left,name=exp,convbox]{$\frac{\exp(\cdot)}{\ones\tp\exp(\cdot)}$};
        \draw[darrow] (exp) -++ (-\xdist,0) node[left,name=plus,dspadder]{\Large +};
        \draw[darrow] (plus) -++ (-\xdist+0.3,0) node[left,name=Vtp,convbox]{$\vA_\text{GE}\tp$};
        \draw[darrow] (Vtp) -++ (-\xdist,0) node[left]{$\vec(\widehat{\vC})$};
        \draw[darrow] (plus) -++ (0,\ydist) node[above]{$\vb$};
    \end{tikzpicture}
\caption{Block diagram of the gridded estimator $\Wgrid$.}
\label{fig:neuron}
\end{figure}

\begin{figure}[t]
    \centering
    \begin{tikzpicture}[
            blubb/.style = {regular polygon, regular polygon sides=3, fill=white, thick,scale=1.5, rotate=30},
            darrow/.style = {implies-, double distance between line centers=5pt,thick},
            dline/.style = {double distance between line centers=5pt,thick},
            sarrow/.style = {stealth-,thick},
            dspadder/.style = { circle,fill=white,draw,inner sep=0.5mm},
        ]
	\def\xdist{1.1}
	\def\ydist{1.0}
        \coordinate(y1) at (0,0);
        \draw[darrow] (y1) node[right]{$\vw$} -++ (-\xdist+0.5,0) node[left,name=plus2,dspadder]{\Large +};
        \draw[darrow] (plus2)  -++ (-\xdist+0.25,0) node[convbox,left,name=V]{$\vA^{(2)}$};
        \draw[darrow] (V) -++ (-\xdist,0) node[left,name=exp,convbox]{$\phi(x)$};
        \draw[darrow] (exp) -++ (-\xdist-0.1,0) node[left,name=plus,dspadder]{\Large +};
        \draw[darrow] (plus) -++ (-\xdist+0.25,0) node[left,name=Vtp,convbox]{$\vA^{(1)}$};
        \draw[darrow] (Vtp) -++ (-\xdist,0) node[left]{$\vx$};

        \draw[darrow] (plus) -++ (0,\ydist) node[above]{$\vb^{(1)}$};
        \draw[darrow] (plus2) -++ (0,\ydist) node[above]{$\vb^{(2)}$};
    \end{tikzpicture}
    \caption{Neural network with two layers and activation function $\phi(x)$.}
\label{fig:nn}
\end{figure}
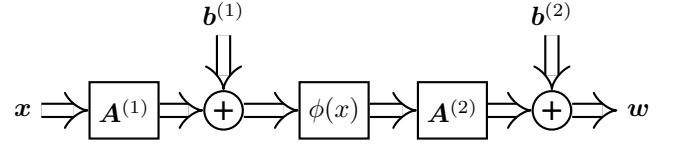

Let us first analyze the structure of the gridded estimator.
If we consider the vectorization of $\Wgrid(\widehat\vC)$, i.e.,
\begin{align}
    \vec(\Wgrid(\widehat\vC)) = \vA_\text{GE} \frac{\exp(\vA_\text{GE}\tp \vec(\widehat\vC) + \vb)}{\ones\tp \exp(\vA_\text{GE}\tp \vec(\widehat\vC) + \vb)}
\end{align}
where
$    \vA_\text{GE} = [\vec(\vW_{\vdelta_1}), \ldots, \vec(\vW_{\vdelta_N})] \in \C^{M^2 \times N}$
and 
$    \vb = [b_1, \ldots, b_N]$,
we see that the function~\eqref{eq:mmseestimator_grid} can be visualized as the block diagram shown in Fig.~\ref{fig:neuron}.
A slightly more general structure is depicted in Fig.~\ref{fig:nn}, which is readily identified as a common structure of a feed-forward neural network (NN) with two \emph{linear layers}, which are connected by a nonlinear \emph{activation function}.
The gridded estimator $\Wgrid$ is a special case of the neural network in Fig.~\ref{fig:nn}, which uses the \emph{softmax} function 
\begin{align}
    \phi(\vx) = \frac{\exp(\vx)}{\ones\tp\exp(\vx)}
\end{align}
as activation function and the specific choices $\vA^{(1)} = \vA_\text{GE}\tp$, $\vA^{(2)} = \vA_\text{GE}$, $\vb^{(1)} = \vb$ and $\vb^{(2)} = \zeros$ for the variables.

To formulate the learning problem mathematically, we define the set of all functions that can be represented by the NN in Fig.~\ref{fig:nn} as
\begin{align}
    &\setWgrid =\\
 \bigg\{ &\vf(\cdot) : \C^{M^2} \mapsto \C^{M^2},\;
                        \vf(\vx) = \vA^{(2)} \phi(\vA^{(1)}\vx + \vb^{(1)}) + \vb^{(2)}, \notag \\
                        &\vA^{(1)} \in \C^{N\times {M^2}}, \vA^{(2)} \in \C^{{M^2}\times N},
                        \vb^{(1)} \in \C^N,            \vb^{(2)} \in \C^{M^2}\notag
    \bigg\}.
\end{align}
The MSE of a given estimator $\widehat\vW(\cdot)$, which takes the scaled covariance matrix $\widehat\vC$ as input, is given by
\begin{align}
    \eps(\widehat\vW(\cdot)) = \expec[\|\vH-\widehat\vW(\widehat\vC)\;\vY\|^2_F].
\end{align}
The optimal neural network, i.e., the NN-MMSE estimator, is given as the function in the set $\setWgrid$ that minimizes the MSE,
\begin{equation}\label{eq:wgrid_opt}
    \vec(\Wgridstar(\cdot)) = \argmin_{\vec(\widehat\vW(\cdot))\in\setWgrid} \eps(\widehat\vW(\cdot)). 
\end{equation}
Since we assume that the dimension $N$ and the activation function $\phi(\cdot)$ are fixed, the variational problem in~\eqref{eq:wgrid_opt} is simply an optimization over the variables $\vA^{(\ell)}$ and $\vb^{(\ell)}$, $\ell=1,2$.

If we choose the softmax function as activation function, and if Assumption~\ref{assumption:discrete_prior} is fulfilled, we have 
\begin{align}
    \eps(\Wgrid(\cdot)) = \eps(\Wgridstar(\cdot)) = \eps(\Wmmse(\cdot))
\end{align}
since, in this case, the gridded estimator is the MMSE estimator, $\Wgrid(\cdot)=\Wmmse(\cdot)$, and because $\vec(\Wgrid(\cdot))\in\setWgrid$.
In general, we have the relation
\begin{align}
    \eps(\Wgrid(\cdot)) \geq \eps(\Wgridstar(\cdot)) \geq \eps(\Wmmse(\cdot)).
\end{align}

The optimization problem~\eqref{eq:wgrid_opt} is a typical learning problem for a neural network with a slightly unusual cost function.
Due to the expectation in the objective function, we have to revert to stochastic gradient methods to find (local) optima for the variables of the neural network.
Unlike the \emph{gridded estimator}~\eqref{eq:mmseestimator_grid}, which relies on analytic expressions for the covariance matrices $\vC_\vdelta$, the \emph{neural network estimator} merely needs a large data set $\{(\vH_1,\vY_1), (\vH_2,\vY_2), \ldots \}$ of channel realizations and corresponding observations to optimize the variables.
In fact, we could also take samples of channel vectors and observations from a measurement campaign to learn the NN-MMSE estimator for the ``true'' channel model.
This requires that the SNR during the measurement campaign is significantly larger than the SNR in operation.
If, as assumed, the noise covariance matrix is known, the observations can then be generated by adding noise to the channel measurements.

The basic structure of the NN-MMSE estimator is depicted in Fig.~\ref{fig:estimator}.
The learning of the optimal variables $\vA^{(\ell)}$ and $\vb^{(\ell)}$ is performed off-line and needs to be done only once.
During operation, the channel estimates are obtained by first forming the scaled sample covariance matrix $\widehat\vC$, which is then fed into the neural network $\Wgridstar(\cdot)$.
Finally, the output $\Wgridstar(\widehat\vC)$ of the neural network is applied as a linear filter to the observations $\vY$ to get the channel estimates $\Hest$.

\begin{figure}[t]
    \centering
    \begin{tikzpicture}[
        lintrans/.style = {fill=white,draw,isosceles triangle, isosceles triangle apex angle=58, minimum width=0.8cm},
        darrow/.style = {implies-, double distance between line centers=5pt,thick},
        dline/.style = {double distance between line centers=5pt,thick},
        sarrow/.style = {stealth-,thick},
        dspadder/.style = { circle,fill=white,draw,inner sep=0.5mm},
        convbox/.style = {fill=white,draw,minimum width=1cm,minimum height=0.75cm},
        ]
	\def\xdist{1.8}
	\def\ydist{2.0}
        \coordinate(y1) at (0,0);
        \draw[darrow] (y1) node[right]{$\Hest$} -- ++(-\xdist,0) node[convbox,name=V]{$\widehat{\bm W}$};
        \draw[darrow] (V) -- ++(-\xdist-\xdist,0) coordinate (intersection) -- ++(-\xdist,0) node[left]{$\vY$};
        \draw[darrow] ($(V.south) + (0.4,-0.5)$) -- ($(V.north) + (-0.4,0.4)$) |- ($(V) + (-\xdist+0.2,\ydist)$) node[midway,right,shift={(0.1,-0.8)}]{$\widehat\vW=\Wgridstar(\widehat\vC)$} node[convbox,name=structure]{$\Wgridstar(\cdot)$};
        \draw[darrow] (structure) -| ($(intersection) + (0,1)$) node[midway,left,shift={(-0.1,-0.1)}]{$\widehat\vC$} node[convbox,name=cov]{sample cov.};
        \draw[darrow] (cov) -- ($(intersection) + (0,2.25pt)$);
        \draw[thick] (structure) node[convbox,name=learning,above,shift={(-2.6,0.5)}]{\small off-line learning};
        \draw[sarrow] ($(structure.south) + (0.3,-0.3)$) -- ($(structure.north) + (-0.3,0.5)$) -| (learning.east) node[midway,above,shift={(0.8,0)}]{$\vA^{(\ell)},\vb^{(\ell)}$};
        \draw[darrow] (structure) node[convbox]{$\Wgridstar(\cdot)$};
        \draw[darrow] (V) node[convbox]{$\widehat{\bm W}$};
    \end{tikzpicture}
    \caption{Channel estimator with embedded neural network.}
\label{fig:estimator}
\end{figure}
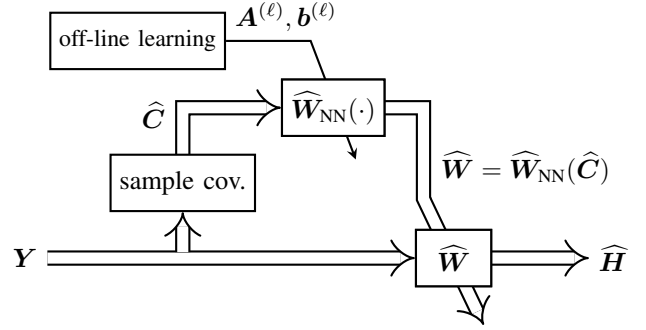

With proper initialization and sufficient quality of the training data, the neural network estimator is guaranteed to outperform the gridded estimator, which has the same computational complexity.
However, there are two problems with this learning approach, which we address in the following sections.
First, finding the optimal neural network $\Wgridstar$ is too difficult, because the number of variables is huge and the optimization problem is not convex.
Second, even if the optimal variables were known, the computation of the channel estimate $\hest_t = \Wgridstar(\widehat\vC)\,\vy_t$ is too complex:
Evaluating the output of the neural network $\Wgridstar(\widehat\vC)$ needs $\order(M^2N)$ floating point operations due to the matrix-vector products (cf.~Fig.~\ref{fig:nn}).
For example, if the grid size $N$ needs to scale linearly with the number of antennas $M$ to obtain accurate estimates, the computational complexity scales as $\order(M^3)$, which is too high for practical applications.

\section{Low-complexity MMSE Estimation}

With Assumption~\ref{assumption:discrete_prior} the gridded estimator $\Wgrid$ in~\eqref{eq:mmseestimator_grid} is the MMSE estimator.
In the following, we introduce additional assumptions, which help to simplify $\Wgrid$.
With these assumptions, we get a fast channel estimator, i.e., one with a computational complexity of only $\order(M\log M)$.
Just as for the gridded estimator, the fast estimator is no longer the MMSE estimator if the assumptions are violated.
However, in analogy to Sec.~\ref{sec:mmse_learning}, the structure of this fast estimator motivates the convolutional neural network (CNN) estimator presented in Sec.~\ref{sec:neural_network}.

Our approach to reduce the complexity of $\Wgrid$ can be broken down into two steps.
First, we exploit common structure of the covariance matrices, which occur for commonly used array geometries.
In a second step, we use an approximated shift-invariance structure, which is present in a certain channel model with only a single path of propagation.
With those two steps, we reduce the computational complexity from $\order(M^2N)$ to $\order(M\log M)$.

\subsection{A Structured MMSE Estimator}
\label{sec:structured_mmse}

In the first step, we replace the filters $\vW_{\vdelta_i}$ in~\eqref{eq:mmseestimator_grid} with structured matrices that use only $\order(M)$ variables.
Specifically, we make the following assumption. 
\begin{assumption}\label{assumption:diag_structure}
    The filters $\vW_{\vdelta_i}$ can be decomposed as
    \begin{equation}\label{eq:diag_structure}
        \vW_{\vdelta_i} = \vQ\he \diag( \vw_i ) \vQ
    \end{equation}
    with a common matrix $\vQ \in \C^{K\times M}$ and vectors
    $\vw_i \in \R^K$ where $\order(K)=\order(M)$.
\end{assumption}

Note that the requirement $\order(K) = \order(M)$ ensures the desired dimensionality reduction and $\vw_i\in\R^K$ leads to self-adjoint filters.
Combining Assumptions~\ref{assumption:discrete_prior} and~\ref{assumption:diag_structure}, we get the following result.
\begin{thm}\label{thm}
    Given Assumptions~\ref{assumption:discrete_prior} and~\ref{assumption:diag_structure}, the MMSE estimator of $\vW_\vdelta$ simplifies to
    \begin{align}\label{eq:Wstruct}
        \Wstruct(\widehat\vC) = \vQ\he \diag(\wstruct(\hat\vc)) \vQ
    \end{align}
    where
    \begin{equation}\label{eq:sample_spectrum_general}
        \hat \vc = \frac{1}{\sigma^2} \sum_{t=1}^T \abs{\vQ\vy_t}^2.
    \end{equation}
    Moreover, the element-wise filter $\wstruct$ is given by
    \begin{equation}\label{eq:diag_opt_filt}
        \wstruct(\hat \vc) = \vA_\text{SE} \frac{\exp(\vA_\text{SE}\tp \hat \vc + \vb)}{\ones\tp \exp(\vA_\text{SE}\tp \hat \vc + \vb)}
    \end{equation}
    where the matrix
    \begin{equation}\label{eq:amat}
        \vA_\text{SE} = [\vw_i, \ldots, \vw_N ] \in \R^{K\times{N}}
    \end{equation}
    contains the element-wise MMSE filters~\eqref{eq:diag_structure}, and the entries of the vector
    \begin{equation}
      \vb = [b_1, \ldots, b_N ]\tp \in \R^N
    \end{equation}
    are given by~\eqref{eq:bdelta}.
\end{thm}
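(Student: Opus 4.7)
The plan is to substitute the structural decomposition of Assumption~\ref{assumption:diag_structure} into the gridded MMSE expression~\eqref{eq:mmseestimator_grid} and show that every occurrence of $\widehat\vC$ collapses into a dependence on the much smaller vector $\hat\vc$. Since Assumption~\ref{assumption:discrete_prior} is already subsumed in~\eqref{eq:mmseestimator_grid}, the work is essentially algebraic.

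First I would handle the scalar exponent $\tr(\vW_{\vdelta_i}\widehat\vC)$. Writing $\vW_{\vdelta_i} = \vQ\he \diag(\vw_i) \vQ$ and applying the cyclic property of the trace yields $\tr(\vW_{\vdelta_i}\widehat\vC) = \tr\bigl(\diag(\vw_i)\, \vQ\widehat\vC\vQ\he\bigr)$, which depends on $\widehat\vC$ only through the diagonal of $\vQ\widehat\vC\vQ\he$. Using the definition of $\widehat\vC$ in~\eqref{eq:samplecov}, that diagonal is exactly $\hat\vc = \tfrac{1}{\sigma^2}\sum_t \abs{\vQ\vy_t}^2$ from~\eqref{eq:sample_spectrum_general}. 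Therefore $\tr(\vW_{\vdelta_i}\widehat\vC) = \vw_i\tp \hat\vc$, which matches the $i$th entry of $\vA_\text{SE}\tp \hat\vc$ by definition~\eqref{eq:amat}.

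Next I would handle the matrix-valued part. Factoring $\vQ\he$ and $\vQ$ out of the finite sum in the numerator of~\eqref{eq:mmseestimator_grid}, we obtain
\begin{align}
\sum_{i=1}^N \exp(\vw_i\tp \hat\vc + b_i)\, \vW_{\vdelta_i}
= \vQ\he \diag\!\Bigl(\sum_{i=1}^N \exp(\vw_i\tp\hat\vc + b_i)\, \vw_i\Bigr) \vQ.
\end{align}
Dividing by the (scalar) denominator $\sum_i \exp(\vw_i\tp\hat\vc + b_i)$ keeps the $\vQ\he (\cdot) \vQ$ sandwich intact and turns the inner weights into the softmax combination
\begin{align}
\frac{\sum_i \exp(\vw_i\tp\hat\vc + b_i)\, \vw_i}{\sum_i \exp(\vw_i\tp\hat\vc + b_i)}
= \vA_\text{SE}\, \frac{\exp(\vA_\text{SE}\tp \hat\vc + \vb)}{\ones\tp \exp(\vA_\text{SE}\tp \hat\vc + \vb)},
\end{align}
which is exactly $\wstruct(\hat\vc)$ as defined in~\eqref{eq:diag_opt_filt}. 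Plugging this back under the $\diag(\cdot)$ yields the claimed form $\Wstruct(\widehat\vC) = \vQ\he \diag(\wstruct(\hat\vc))\vQ$.

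There is no real obstacle, only two things worth being careful about: verifying that the $\vQ\he(\cdot)\vQ$ sandwich commutes with the finite summation and division by the scalar partition function (it does, trivially), and noting that the biases $b_i$ are left untouched—Assumption~\ref{assumption:diag_structure} is not needed to rewrite them, since~\eqref{eq:bdelta} is accepted as-is. The theorem's content is thus precisely that, under the assumed diagonal structure, the entire estimator collapses to $K$-dimensional quantities obtained by passing the observations through $\vQ$, which is exactly what is needed to reduce complexity in the next subsection.
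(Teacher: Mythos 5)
Your proposal is correct and follows essentially the same route as the paper's proof: use the cyclic property of the trace to reduce $\tr(\vW_{\vdelta_i}\widehat\vC)$ to $\vw_i\tp\hat\vc$ with $\hat\vc$ the diagonal of $\vQ\widehat\vC\vQ\he$, then factor the $\vQ\he(\cdot)\vQ$ sandwich out of the softmax-weighted sum in~\eqref{eq:mmseestimator_grid}. No gaps.
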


\begin{proof}
If we replace the filters $\vW_{\vdelta_i}$ in~\eqref{eq:mmseestimator_grid} with the parametrization in~\eqref{eq:diag_structure}, we can simplify the trace expressions according to
\begin{align}
    \tr(\vW_{\vdelta_i} \widehat \vC)
&= \tr(\vQ\he\diag(\vw_i) \vQ \widehat\vC)\\
&= \tr\Big(\diag(\vw_i) \frac{1}{\sigma^2}\sum_{t=1}^T\vQ \vy_t^{} \vy_t\he\vQ\he\Big)\\
&= \vw_i\tp \hat \vc
\end{align}
as $\hat \vc$ contains the diagonal elements of the matrix 
\begin{align}
    \frac{1}{\sigma^2}\sum_{t=1}^T\vQ \vy_t^{} \vy_t\he\vQ\he.
\end{align}
Consequently, the gridded estimator in~\eqref{eq:mmseestimator_grid} simplifies to
\begin{align}\label{eq:mmse_filt_circ}
    \Wstruct(\hat\vc) &= \frac{\sum_{i=1}^N\exp( \vw_i\tp\hat\vc + b_i ) \vQ\he \diag(\vw_i) \vQ}{\sum_{i=1}^N\exp(\vw_i\tp\hat \vc + b_i )} \notag \\
                      &= \vQ\he \diag\left(\frac{\sum_{i=1}^N\exp( \vw_i\tp\hat\vc + b_i ) \vw_i }{\sum_{i=1}^N\exp(\vw_i\tp\hat \vc + b_i )}\right) \vQ .
\end{align}
With the definitions of $\vA_\text{SE}$ and $\vb$, we can write~\eqref{eq:mmse_filt_circ} as~\eqref{eq:Wstruct}.
\end{proof}

If Assumptions~\ref{assumption:discrete_prior} and~\ref{assumption:diag_structure} hold, the MMSE estimates of the channel vectors using the \emph{structured estimator} (SE) can be calculated as
\begin{equation}
    \hest_t = \vQ\he \diag(\wstruct(\hat\vc)) \vQ \vy_t
\end{equation}
i.e., $\Wmmse(\widehat\vC) = \vQ\he\diag(\wstruct(\hat\vc)\vQ$.

Given $\wstruct(\hat\vc)$, the complexity of the estimator depends only on the number of operations required to calculate matrix-vector products with $\vQ$ and $\vQ\he$.
To achieve the desired complexity $\order(M\log M)$, the matrix $\vQ$ must have some special structure that enables fast computations.
If this is the case, the complexity of the \emph{structured estimator} is dominated by the calculation of $\wstruct(\hat\vc)$, which is $\order(NK)$.
In Sec.~\ref{sec:fast_mmse}, we show how the complexity of the calculation of $\wstruct(\hat\vc)$ can be reduced further.

\textbf{Examples.} 
For a uniform linear array (ULA), the channel covariance matrices, which have Toeplitz structure, are asymptotically equivalent to corresponding circulant matrices~(cf.~\cite{gray_2006_toeplitz}, Appendix~\ref{app:ula}).
Since all circulant matrices have the columns of the discrete Fourier transform (DFT) matrix $\vF$ as eigenvectors, we have the asymptotic equivalence 
\begin{equation}
    \vC_\vdelta \asymp \vF\he \diag(\vc_\vdelta) \vF \;\;\forall \vdelta
\end{equation}
where $\vc_\vdelta$ contains the diagonal elements of $\vF \vC_\vdelta \vF\he$.
As a consequence, we have a corresponding asymptotic equivalence 
\begin{equation}
    \vW_\vdelta \asymp \vF\he \diag(\vw_\vdelta) \vF \;\;\forall \vdelta
\end{equation}
where $\vw_\vdelta$ contains the diagonal elements of $\vF \vW_\vdelta \vF\he$.
For a large-scale system, this is a very good approximation~\cite{epstein_how_2005}.
We call the structured estimator that uses Assumption~\ref{assumption:diag_structure} with $\vQ = \vF$ the \emph{circulant estimator}.

To reduce the approximation error for finite numbers of antennas, we can use a more general factorization with $\vQ=\vF_2$, where $\vF_2 \in \C^{2M \times M}$ contains the first $M$ columns of a $2M\times 2M$ DFT matrix.
The class of matrices that can be expressed as
\begin{equation}
    \vW_\vdelta = \vF_2\he \diag(\vw_\vdelta) \vF_2
\end{equation}
are exactly the Toeplitz matrices~\cite{epstein_how_2005}.
Note that the filters $\vW_\vdelta$ do not actually have Toeplitz structure, even if the channel covariance matrices are Toeplitz matrices.
The Toeplitz assumption only holds in the limit for large numbers of antennas due to the arguments given above or for low SNR when the noise covariance matrix dominates the inverse in~\eqref{eq:cond_mmse_estimator2}.
Nevertheless, the Toeplitz structure is more general than the circulant structure and, thus, yields a smaller approximation error.
The estimator that uses Assumption~\ref{assumption:diag_structure} with $\vQ = \vF_2$ is denoted as the \emph{Toeplitz estimator}. 

An analogous result can be derived for uniform rectangular arrays (cf.~Appendix~\ref{app:ura}).
In this case, the transformation $\vQ$ is the Kronecker product of two DFT matrices, whose dimensions correspond to the number of antennas in both directions of the array.

A third example with a decomposition as in Assumption~\ref{assumption:diag_structure} is a setup with distributed antennas~\cite{ngo_cell-free_2015,ngo_cell-free_2017}.
For distributed antennas, the covariance matrices are typically
modelled as diagonal matrices and, thus, the filters $\vW_\vdelta$ are diagonal as well.
That is, for distributed antennas we simply have $\vQ=\id$.

\subsection{A Fast MMSE Estimator}
\label{sec:fast_mmse}

The main complexity in the evaluation of $\wstruct(\cdot)$ stems from the matrix-vector products in~\eqref{eq:diag_opt_filt}.
The complexity can be reduced by using only matrices $\vA_\text{SE}$ that allow for fast matrix-vector products.
One possible choice are the circulant matrices.

In fact, circulant matrices naturally arise in the structured estimator for a single-path channel model with a single parameter $\delta$ for the angle of arrival.
In this model, the power spectrum is shift-invariant, i.e., $g(\theta;\delta) = g(\theta - \delta)$.
As a result, for $N=K$, the samples $\vw_i$ of the structured estimator $\wstruct$ in~\eqref{eq:diag_opt_filt} are approximately shift invariant, i.e., their entries satisfy $[\vw_i]_j = [\vw_{i+n}]_{j+n}$ (the sums are modulo $M$) and the following assumption is satisfied (more details are given in Appendix~\ref{app:shift_invariance}).
\begin{assumption}\label{assumption:shift_invariance}
    The matrix $\vA_\text{SE}$ in~\eqref{eq:amat} is circulant and given by
\begin{equation}\label{eq:fast_estimator_trafo}
    \vA_\text{SE} = \vF\he \diag(\vF \vw_0) \vF 
\end{equation}
for some $\vw_0 \in \R^K$, where $\vF$ is the $K$-dimensional DFT matrix.
\end{assumption}

Note that Assumption~\ref{assumption:shift_invariance} is, in principle, independent of Assumption~\ref{assumption:diag_structure}.
We see from the examples that the structure of $\vW_\vdelta$ and, thus, the choice for $\vQ$ is motivated by the array geometry, while the assumption that $\vA_\text{SE}$ is circulant is motivated by the physical channel model.
The example in Appendix~\ref{app:shift_invariance}, which is based on the ULA geometry and the 3GPP channel model, suggests a circulant structure for both the filters $\vW_\vdelta$ and the matrix $\vA_\text{SE}$ in $\wstruct(\cdot)$.

However, we could think of other system setups, where the structure of $\vW_\vdelta$ in Assumption~\ref{assumption:diag_structure} is different than the structure of $\vA_\text{SE}$ in Assumption~\ref{assumption:shift_invariance}.
As an illustration, consider a toy example where we have an array of antennas along a long corridor, say in an airplane.
Then we could reasonably assume diagonal covariance matrices, i.e., $\vQ=\id$, but at the same time we have a shift-invariance for different positions of the users in the corridor, i.e., Assumption~\ref{assumption:shift_invariance} also holds.

Given the relationship  between circulant matrices and circular convolution, we can write
\begin{equation}\label{eq:circulant_parametrization}
    \vA_\text{SE} \vx = \vF\he \diag(\vF \va) \vF\vx = \va \ast \vx
\end{equation}
with $\va \in \R^K$.
Because of the FFT, the computational complexity of evaluating $\wstruct(\hat\vc)$ reduces to $\order(M\log M)$ if $\order(K)=\order(M)$.
That is, we get a \emph{fast estimator} (FE)
\begin{equation}\label{eq:fast_opt_filt}
    \wshift(\hat\vc) = \vw_0 \ast \text{softmax}(\tilde\vw_0 \ast \hat\vc + \vb)
\end{equation}
by incorporating the constraint~\eqref{eq:fast_estimator_trafo} into $\wstruct$. 
The vector $\tilde\vw_0$ contains the entries of $\vw_0$ in reversed order.

\section{Low-complexity Neural Network}
\label{sec:neural_network}

For most channel models, Assumptions~\ref{assumption:discrete_prior}, \ref{assumption:diag_structure}, and~\ref{assumption:shift_invariance} only hold approximately or even not at all.
That is, the estimator $\wshift$ in~\eqref{eq:fast_opt_filt} does not yield the MMSE estimator in most practical scenarios.
Nevertheless, it is still worthwhile to consider an estimator with similar structure:
Calculating a channel estimate with $\Wgrid$ costs $\order(M^2N)$ FLOPS, while using $\wshift$ only requires $\order(M\log M)$ operations.
As we discuss in the following, another advantage is that the number of variables that have to be learned reduces from $\order(M^2N)$ to $\order(M)$, since we no longer have full matrices, but circular convolutions.

In Sec.~\ref{sec:mmse_learning} we discussed how learning can be used to compensate for the approximation error that results from a finite grid size $N$, i.e., a violation of Assumption~\ref{assumption:discrete_prior}.
Analogously, we can learn the variables of a convolutional neural network inspired by $\wshift$ to compensate for violations of Assumptions~\ref{assumption:diag_structure} and~\ref{assumption:shift_invariance}.
To this end, we define the set of CNNs 
\begin{align}\label{eq:setwshift}
\setWshift = \Bigg\{
    &\vx \mapsto \va^{(2)} \ast \phi\Big(\va^{(1)}\ast \vx + \vb^{(1)}\Big) + \vb^{(2)}\,,\notag\\
    &\va^{(\ell)}\in\set \R^K, \vb^{(\ell)}\in\R^K, \ell=1,2
\Bigg\}
\end{align}
and the optimal \emph{CNN estimator} is the one using
\begin{equation}\label{eq:wshift_opt}
    \wshiftstar(\cdot) = \argmin_{\hat{\vw}(\cdot) \in\setWshift} \eps(\vQ\he \hat\vw(\cdot) \vQ).
\end{equation}
Again, we assume that the activation function $\phi(\cdot)$ is fixed.
Thus, the optimization is only with respect to the convolution kernels $\va^{(\ell)}$ and the bias vectors $\vb^{(\ell)}$.

Analogously to Sec.~\ref{sec:mmse_learning}, if we choose the softmax function as activation function, we have $\wshift \in \setWshift$.
Consequently, if Assumptions~1--3 are fulfilled we get
\begin{align}
    \eps(\vQ\he\wshift(\cdot)\vQ) = \eps(\vQ\he\wshiftstar(\cdot)\vQ) = \eps(\Wmmse(\cdot)).
\end{align}
In general, we have
\begin{align}
    \eps(\vQ\he\wshift(\cdot)\vQ) \geq \eps(\vQ\he\wshiftstar(\cdot)\vQ) \geq \eps(\Wmmse(\cdot)).
\end{align}

The stochastic-gradient method that learns the CNN is described in detail in Alg.~\ref{alg:learnedfastmmse}.
We want to stress again that the learning procedure is performed off-line and does not add to the complexity of the channel estimation.
During operation, the channel estimation is performed by evaluating $\wshiftstar(\hat\vc)$ and the transformations involving the $\vQ$ matrix for given observations.
If the variables are learned from simulated samples according to the 3GPP or any other channel model, this algorithm suffers from the same model-reality mismatch as does any other model-based algorithm.
The fact that the proposed algorithm can also be trained on true channel realizations puts it into a significant advantage over other non-learning-based algorithms, which have to rely on models only. 

\begin{algorithm}[t]
\begin{algorithmic}[1]
    \State Initialize variables $\va^{(\ell)}$ and $\vb^{(\ell)}$ randomly
    \State Generate/select a mini-batch of $S$ channel vectors $\vH_s$ and corresponding observations $\vY_s$ (and $\hat \vc_s$) for $s = 1,$ \dots, $S$ 
    \State Calculate the stochastic gradient
        \[
            \vg = \frac{1}{S}\sum_{s=1}^S \frac{\partial}{\partial [\va^{(\ell)};\vb^{(\ell)}]} \norm{ \vH_s-\vQ\he \diag(\hat\vw(\hat\vc_s)) \vQ\vY_s}_F^2
        \]
        with $\hat\vw(\vx)$ as stated in~\eqref{eq:fast_opt_filt}
    \State Update variables with a gradient algorithm (e.g.,~\cite{diederik_adam_2014})
    \State Repeat steps 1--3 until a convergence criterion is satisfied
\end{algorithmic}
\caption{Learned fast MMSE filter}\label{alg:learnedfastmmse}
\end{algorithm}

In the simulations, we compare two variants of the CNN estimator.
First, we use the softmax activation function $\phi=\frac{\exp(\cdot)}{\ones\tp\exp(\cdot)}$.
The resulting softmax CNN estimator is a direct improvement over the fast estimator with $\wshift$, which was derived under Assumptions 1--3.
In the second variant, we use a rectified linear unit (ReLU) $\phi(x) = [x]_+$ as activation function, since ReLUs were found to be easier to train than other activation functions~\cite{Glorot11} (and they are also easier to evaluate than the softmax function).

\subsection{Hierarchical Learning}
\label{sec:learning}

\begin{algorithm}[t]
    \begin{algorithmic}[1]
    \State Choose upsampling factor $\beta > 1$ and number of stages $n$
    \State Set $M_0=\lceil M/\beta^n\rceil$, $K_0=\lceil K/\beta^n\rceil$
    \State Learn optimal $\va_0^{(\ell)}, \vb_0^{(\ell)} \in \R^{K_0}$ using Alg.~\ref{alg:learnedfastmmse} assuming $M_0$ antennas with random initializations
    \For{$i$ from $1$ to $n$}
        \State Set $M_i = \lceil M/\beta^{n-i}\rceil$ and $K_i = \lceil K/\beta^{n-i} \rceil$
        \State Interpolate $\va_i^{(\ell)},\vb^{(\ell)}_i \in \R^{K_i}$ from $\va_{i-1}^{(\ell)},\vb^{(\ell)}_{i-1} \in \R^{K_{i-1}}$
        \State Normalize $\va_i^{(\ell)}$ by dividing by $\beta$
        \State Learn optimal $\va_i^{(\ell)}, \vb_i^{(\ell)}$ using Alg.~\ref{alg:learnedfastmmse} assuming $M_i$
\hspace*{\algorithmicindent}antennas and using $\va_i^{(\ell)}, \vb_i^{(\ell)}$ as initializations
    \EndFor
\end{algorithmic}
\caption{Hierarchical Training}\label{alg:hierarchical_learning}
\end{algorithm}

\makeatletter
\pgfplotsset{
    boxplot/hide outliers/.code={
        \def\pgfplotsplothandlerboxplot@outlier{}%
    }
}
\makeatother
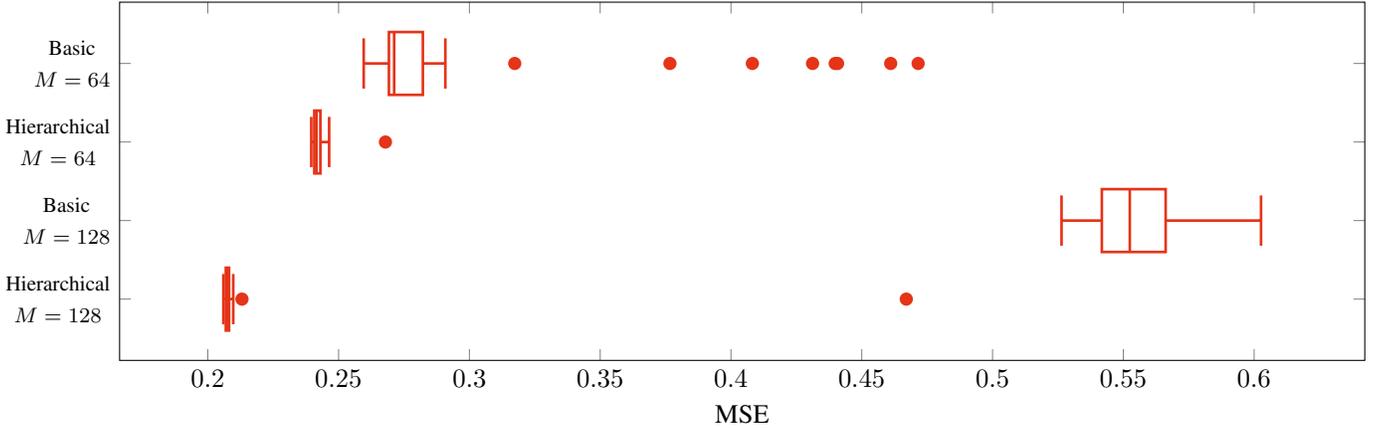
\begin{figure*}[t]
    \centering
    \begin{tikzpicture}
\begin{axis}[
    height=0.35\textwidth,
    width=\textwidth,
    xlabel= MSE,
    yticklabel style={align=center},
    ytick = {1,2,3,4},
    yticklabels = {
        {\footnotesize Hierarchical\\\footnotesize $M=128$},
        {\footnotesize Basic\\\footnotesize $M=128$},
        {\footnotesize Hierarchical\\\footnotesize $M=64$},
        {\footnotesize Basic\\\footnotesize $M=64$},
    },
]

\addplot[
    boxplot,
    line width = 1.0pt,
    color = TUMBeamerRed,
    mark = *,
    ] table [ignore chars=",y index=0,col sep=comma] {figures/figure4-CircReLUHier-nAnt-128.csv};
\addplot[
    boxplot,
    line width = 1.0pt,
    color = TUMBeamerRed,
    mark = *,
    ] table [ignore chars=",y index=0,col sep=comma] {figures/figure4-CircReLU-nAnt-128.csv};
\addplot[
    boxplot,
    line width = 1.0pt,
    color = TUMBeamerRed,
    mark = *,
    ] table [ignore chars=",y index=0,col sep=comma] {figures/figure4-CircReLUHier-nAnt-64.csv};
\addplot[
    boxplot,
    line width = 1.0pt,
    color = TUMBeamerRed,
    mark = *,
    ] table [ignore chars=",y index=0,col sep=comma] {figures/figure4-CircReLU-nAnt-64.csv};

\end{axis}
\end{tikzpicture}
    \caption{
        Box plot with outliers (marked as dots) of the MSE after learning for $10\,000$ iterations for hierarchical and non-hierarchical learning. 
        We show results for $M=64$ and $M=128$ antennas for $50$ data points per plot and with the DFT matrix $\vQ=\vF$ for the transformation.
        Scenario with three propagation paths, $\sigma^2 = 1$, $T=1$.
    }
    \label{fig:hierarchical_learning}
\end{figure*}

Local optima are a major issue when learning the neural networks, i.e., when calculating a solution of the nonlinear optimization problem~\eqref{eq:wshift_opt}.
During our experiments, we observed that, especially for a large number of antennas, the learning often gets stuck in local optima.
To deal with this problem, we devise a hierarchical learning procedure that starts the learning with a small number of antennas and then increases the number of antennas step by step.

For the single-path channel model, which motivates the circulant structure of the matrices $\vA^{(\ell)}$, the convolution kernel $\vw_0$ contains samples of the continuous function $w(u;0)$, i.e., $[\vw_0]_k = w(2\pi(k-1)/K; 0)$ (cf.~Appendix~\ref{app:shift_invariance}).
If we assume that $w(u;0)$ is a smooth function, we can quite accurately calculate the generating vector $\vw_0$ for a system with $M$ antennas from the corresponding vector of a system with less antennas by commonly used interpolation methods.

This observation inspires the following heuristic for initializing the variables $\va^{(\ell)}$ and $\vb^{(\ell)}$ of a K-dimensional CNN.
We first learn the variables of a smaller CNN, e.g., we choose a CNN with dimension $K/2$.
We use the resulting variables to initialize every second entry of the vectors $\va^{(\ell)}$ and $\vb^{(\ell)}$.
The remaining entries can be obtained by numerical interpolation.

For the filter $\wshiftstar(\cdot)$, it is desirable to have outputs of similar magnitude, irrespective of the dimension $K$.
By doubling the number of entries of the convolution kernels via interpolation, we approximately double the largest absolute value of $\va^{(\ell)} \ast \vx$.
To remedy this issue, we normalize the kernels of the convolution after the interpolation such that we get approximately similar values at the outputs of each layer.
This heuristic leads to the hierarchical learning described in Alg.~\ref{alg:hierarchical_learning}.

The hierarchical learning significantly improves convergence speed and reduces the computational complexity per iteration due to the reduced number of antennas in many learning steps.
In fact, for a large number of antennas the hierarchical learning is essential to obtain good performance.
In Fig.~\ref{fig:hierarchical_learning}, we show a standard box plot~\cite{frigge_implementations_1989} of the MSE of the estimators obtained by applying the hierarchical and the standard learning procedure.
Each data point used to generate the box plot corresponds to a randomly initialized estimator and one run of Alg.~\ref{alg:hierarchical_learning} with $\beta=2$ and $n=3$ for the hierarchical learning and $n=0$ for the non-hierarchical learning (and the same total number of iterations).
The box plot depicts a summary of the resulting distribution, showing the median and the quartiles in a box and outliers outside of the whiskers as additional dots.
The whiskers are at the position of the lowest and highest data point within a distance from the box of 1.5 times the box size.
As we can see, without the hierarchical learning, the learning procedure gets stuck in local optima.
With the hierarchical approach, we are less likely to be caught in local optima during the learning process.

\section{Related Work}

In this section, we give a short summary of two alternative channel estimation methods with $\order(M\log M)$ complexity.
These methods will serve as a benchmark in the numerical evaluation of our novel algorithms.

\subsection{ML Covariance Matrix Estimation}\label{sec:ml}

The common approach to approximate MMSE channel estimation for unknown covariance matrices is to use a maximum likelihood (ML) estimate of the channel covariance matrix.
That is, we first find an ML estimate of the channel covariance matrix $\vC_\vdelta^\text{ML}$ based on the observations $\vY$ and then, assuming the estimate is exact, calculate the MMSE estimates of the channel vectors as in~\eqref{eq:cond_mmse_estimator1}, \eqref{eq:cond_mmse_estimator2}.
The disadvantage of ML estimation is that a general prior $p(\vdelta)$ cannot be incorporated.

The likelihood function for the channel covariance matrix given the noise covariance matrix is
\begin{multline}\label{eq:likelihood}
    L(\vC_\vdelta\vert \vY) \\= \exp\Big( -\sum_{t=1}^T \vy_t\he (\vC_\vdelta + \cz)\inv \vy_t - T \log\abs{\vC_\vdelta + \cz}\Big) \frac{1}{\pi^{MT}}
\end{multline}
and the ML problem reads as
\begin{equation}
    \vC_\vdelta^\text{ML} = \argmax_{\vC_\vdelta \in \set M} L(\vC_\vdelta \vert \vY)
\end{equation}
where $\set M$ is the set of admissible covariance matrices, which has to be included in the set of positive semi-definite matrices $\set S_0^+$, i.e., $\set M \subset \set S_0^+$.

If $\set M = \set S_0^+$, the ML estimate is given in terms of the sample covariance matrix
\begin{equation}
    \vS = \frac{1}{T} \sum_{t=1}^T \vy_t\vy_t\he
\end{equation}
as
\begin{equation}
    \vC_\vdelta^\text{ML} = \cz^{1/2} P_{\set S_0^+}\left( \cz^{-1/2} \vS \cz^{-1/2} -\id \right) \cz^{1/2}
\end{equation}
where we use the projection $P_{\set S_0^+}(\cdot)$ onto the cone of positive semi-definite matrices~\cite{neumann_low-complexity_2015}.
The projection $P_{\set S_0^+}(\vX)$ of a hermitian matrix $\vX$ replaces all negative eigenvalues of $\vX$ with zeros.
For $\cz = \sigma^2 \id$, the estimate simplifies to
\begin{equation}
    \vC_\vdelta^\text{ML} = P_{\set S_0^+}\left( \widehat \vC - \sigma^2\id \right).
\end{equation}

\subsubsection*{Low-Complexity ML Estimation}

If we have a ULA at the base station, we know that the covariance matrix has to be a Toeplitz matrix.
Thus, we should choose $\set M = \set T_0^+$ as the set of positive semi-definite Toeplitz matrices.
In this case, the ML estimate can no longer be given in closed form and iterative methods have to be used~\cite{burg_estimation_1982,anderson_asymptotically_1973,Haghighatshoar17}.

Since we are interested in low-complexity estimators, we approximate
the solution by reducing the constraint set to positive
semi-definite, circulant matrices $\set M = \set C^+$.
This choice reduces the complexity of the ML estimator significantly~\cite{neumann_low-complexity_2015}.
The reason is that all circulant matrices have the columns of the
DFT matrix $\vF$ as eigenvectors.
That is, we can parametrize the ML estimate as
\begin{equation}\label{eq:circ_param}
    \vC_\vdelta^\text{ML} = \vF\he \diag( \vc_\vdelta^\text{ML} ) \vF
\end{equation}
where $\vc_\vdelta^\text{ML} \in \R^M$ contains the $M$ eigenvalues of $\vC^\text{ML}_\vdelta$.

Incorporating \eqref{eq:circ_param} into the likelihood function~\eqref{eq:likelihood}, we notice that the estimate of the channel covariance matrix can be given in terms of the estimated power spectrum~\cite{dembo_relation_1986,neumann_low-complexity_2015}
\begin{equation}\label{eq:sample_spectrum}
    \vs = \frac{1}{T} \sum_{t=1}^T \abs{\vF \vy_t}^2
\end{equation}
where $\abs{\vx}^2$ is the vector of absolute squared entries of $\vx$.
Specifically, we have the estimated eigenvalues
\begin{equation}
    \vc_\vdelta^\text{ML} = [\vs - \sigma^2 \ones]_+
\end{equation}
where the $i$th element of $[\vx]_+$ is $\max([\vx]_i,0)$ and where
$\ones$ is the all-ones vector.
The approximate MMSE estimate of the channel vector in coherence
interval $t$ is given by
\begin{equation}\label{eq:circulant_ml}
    \hest_t = \vF\he \diag( \vc_\vdelta^\text{ML} ) \diag( \vc_\vdelta^\text{ML} + \sigma^2\ones)\inv \vF \vy_t
\end{equation}
and can be calculated with a complexity of $\order(M \log M)$ due to the FFT.
The almost linear complexity makes the ML approach with the circulant approximation suitable for large-scale wireless systems.

\subsection{Compressive Sensing Based Estimation}
\label{sec:omp}

The ML-based channel estimation techniques exploit the Toeplitz structure of the covariance matrix, which is a result of regular array geometries and the model~\eqref{eq:covmodel} with a continuous power density function $g$.
In the 3GPP models, this power density function usually has a very limited angular support, i.e., $g(\theta,\vdelta)$ is approximately zero except for $\theta$ in the vicinity of the cluster centers $\vdelta$.
The resulting covariance matrices have a very low numerical rank~\cite{Wiese16}.
As a consequence, under such a model, any given realization of a channel vector admits a sparse approximation
\begin{equation}
\vh \approx \vD\vx
\end{equation}
in a given dictionary $\vD\in\C^{M\times Q}$, where all but $k$ entries of $\vx$ are zero.
The vector $\vx$ can be found by solving the sparse approximation problem
\begin{equation}
\vx = \argmin_{\vx\in\C^Q : |\supp(\vx)| \leq k} \|\vy - \vD\vx\|^2
\end{equation}
where $|\supp(\vx)|$ denotes the number of nonzero entries of $\vx$.
This combinatorial optimization problem can be solved efficiently with methods from the area of compressive sensing, e.g., the orthogonal matching pursuit (OMP) algorithm~\cite{Gharavi98} or iterative hard thresholding (IHT)~\cite{Blumensath09b}.

It is common to use a dictionary $\vD$ of steering vectors $\va(\vtheta)$ where $\vtheta$ varies between $-\pi$ and $\pi$ on a grid~\cite{Heath14}.
For ULAs, this grid can be chosen such that the dictionary $\vD$ results in an oversampled DFT matrix, which has the advantage that matrix-vector products with this matrix can be computed efficiently.
Furthermore, it was shown in~\cite{Wiese16} that this dictionary is a reasonable choice, at least for the single-cluster 3GPP model, and if the OMP algorithm is used to find the sparse approximation.

The OMP algorithm can be extended to a multiple measurement model 
\begin{equation}
    \vH\approx\vD\vX
\end{equation}
with a row-sparse matrix $\vX$, i.e., each channel realization is approximated as a linear combination of the same $k$ dictionary vectors.
Because the selection of the optimal sparsity level $k$ is non-trivial, we use a genie-aided approach in our simulations. 
The genie-aided OMP algorithm uses the actual channel realizations $\vH$ to decide about the optimal value for $k$ that maximizes the metric of interest.
The result is clearly an upper bound for the performance of the OMP algorithm.

\section{Simulations}\label{sec:simulations}

For the numerical evaluation of the newly introduced algorithms, we focus on the far-field model with a ULA at the base station~(cf.~Appendix~\ref{app:ula}).
We assume that the noise variance $\sigma^2$ and the correct model for the parameters, i.e., the prior $p(\vdelta)$ and the mapping from $\vdelta$ to $\vC_\vdelta$, are known.
That is, for the off-line learning procedure required by the CNN estimators, we can use the true prior to generate the necessary realizations of channel vectors and observations.

We first consider the single-path model that motivates Assumption~\ref{assumption:shift_invariance} (cf.~App.~\ref{app:shift_invariance}).
Even for this idealized model, Assumptions 1--3 only hold approximately.
To compare the simple gridded estimator (GE) $\Wgrid$ (Assumption 1) with the structured estimator (SE) $\wstruct$ (Assumptions 1 and 2) and the fast estimator (FE) $\wshift$ (Assumptions 1--3), we first generate $N=16M$ samples $\delta_i\in[-\pi,\pi]$ according to a uniform distribution (single-path model).
We then evaluate the covariance matrices $\vC_{\delta_i}$ and the MMSE filters $\vW_{\delta_i}$ according to~\eqref{eq:covmodel} and~\eqref{eq:cond_mmse_estimator2} with a Laplace power density~\eqref{eq:laplace_power_density} with an angular spread of $\sigma_\text{AS} = 2^{\circ}$.

The gridded estimator $\Wgrid(\cdot)$ is then given by~\eqref{eq:mmseestimator_grid}.
We have chosen $N$ sufficiently large, such that, for the single-path model, the performance of the GE is close to the performance of the (non-gridded) MMSE estimator.
For the SE that uses $\wstruct(\cdot)$ we consider circulant and Toeplitz structure, i.e., we use the DFT matrix $\vQ=\vF$ for the circulant SE and the partial DFT matrix $\vQ=\vF_2$ for the Toeplitz SE as explained in the examples at the end of Sec.~\ref{sec:structured_mmse}.
The coefficients $\vw_{\delta_i}$ in~\eqref{eq:diag_structure} are found by solving a least-squares problem and the respective estimators can then be evaluated as specified in Theorem~\ref{thm}.
Since we have a finite number of antennas, we expect a performance loss compared to $\Wgrid$, due to violation of Assumption~\ref{assumption:diag_structure}.
For the fast estimator that uses $\wshift$ in~\eqref{eq:fast_opt_filt}, we use a circulant structure $\vQ = \vF$ and we only need to calculate $\vw_0$ for $\delta=0$ since the matrices $\vA_\text{SE}$ in $\wstruct$ are replaced by circulant convolutions.

As a baseline, we also show the MSE for the genie-aided MMSE estimator, which simply uses $\vW_\delta$ for the correct $\delta$.
The per-antenna MSE of the channel estimation for the different approximations for a single snapshot ($T=1$) is depicted in Fig.~\ref{fig:wrt_M_single_cluster} as a function of the number of antennas $M$ for a fixed SNR of \SI{0}{\decibel}.
We see, indeed, a gap between the gridded estimator and the two structured estimators.
As expected, the Toeplitz SE outperforms the circulant SE. 
For this scenario, the FE yields performance close to the circulant SE.
Apparently, the assumption of shift invariance is reasonably accurate.
For a large number of antennas, the relative difference in performance of the algorithms diminishes and all algorithms get quite close to the genie-aided estimator.

\newlength{\plotheight}
\setlength{\plotheight}{0.9\columnwidth}
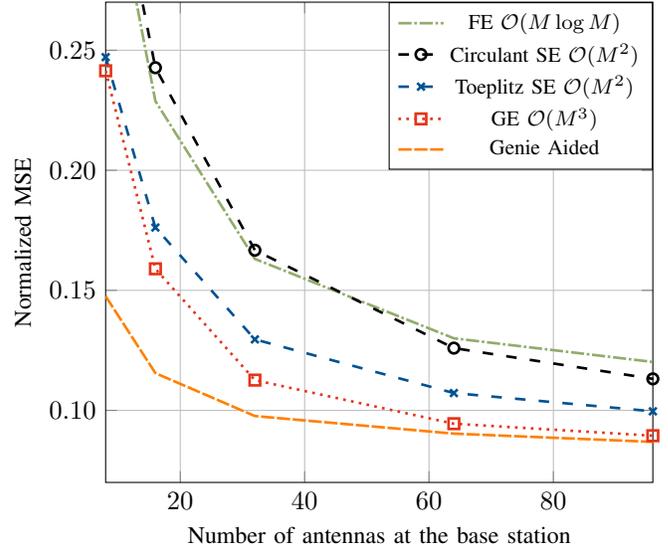
\begin{figure}[t]
    \centering
    \begin{tikzpicture}
\begin{axis}[
    height=\plotheight,
    width=\columnwidth,
    grid=both,
    ticks=both,
    xmin = 8, xmax = 96,
    ymin = 0.07, ymax = 0.27,
    ylabel={\small Normalized MSE },
    xlabel={\small Number of antennas at the base station},
    title={},
    yticklabel style={/pgf/number format/.cd, fixed, fixed zerofill, precision=2},
    legend entries={
        \footnotesize FE $\order(M\log M)$,
        \footnotesize Circulant SE $\order(M^2)$,
        \footnotesize Toeplitz SE $\order(M^2)$,
        \footnotesize GE $\order(M^3)$,
        \footnotesize Genie Aided,
    },
    legend style={at={(1.0,1.0)}, anchor=north east},
]
\addplot[fe,discard if not={Algorithm}{FastMMSE} ]
    table [ignore chars=",x=nAntennas,y=MSE,col sep=comma]
    {figures/figure5.csv};
\addlegendentry{\footnotesize FE $\order(M\log M)$};

\addplot[circse,discard if not={Algorithm}{CircMMSE} ]
    table [ignore chars=",x=nAntennas,y=MSE,col sep=comma]
    {figures/figure5.csv};
\addlegendentry{\footnotesize Circulant SE $\order(M^2)$};

\addplot[toepse,discard if not={Algorithm}{ToepMMSE} ]
    table [ignore chars=",x=nAntennas,y=MSE,col sep=comma]
    {figures/figure5.csv};
\addlegendentry{\footnotesize Toeplitz SE $\order(M^2)$};

\addplot[gridest,discard if not={Algorithm}{DiscreteMMSE} ]
    table [ignore chars=",x=nAntennas,y=MSE,col sep=comma]
    {figures/figure5.csv};
\addlegendentry{\footnotesize GE $\order(M^3)$};

\addplot[genie,discard if not={Algorithm}{GenieMMSE} ]
    table [ignore chars=",x=nAntennas,y=MSE,col sep=comma]
    {figures/figure5.csv};
\addlegendentry{\footnotesize Genie Aided};
\end{axis}
\end{tikzpicture}
\caption{
    MSE per antenna at an SNR of 0\,dB for estimation from a single snapshot ($T=1$). Channel model with one propagation path with uniformly distributed angle and a per path angular spread of $\sigma_\text{AS} = 2^{\circ}$.
}
    \label{fig:wrt_M_single_cluster}
\end{figure}

We see that for this simple channel model, there is not much potential for our learning-based methods.
However, the results change significantly for a more realistic channel model.
In the following, we consider results for the 3GPP model with three propagation paths, which have different relative path gains.
That is, the power density is given by 
\begin{equation}\label{eq:channel_model_3p}
    g_\text{3p}(\theta, \vdelta = [\delta_1,\delta_2,\delta_3,p_1,p_2,p_3]\tp ) = \sum_{i=1}^3 p_i g_\text{lp}(\theta, \delta_i)
\end{equation}
where the angles $\delta_i$ are uniformly distributed.
The path gains $p_i$ are drawn from a uniform distribution in the interval $[0,1]$ and then normalized such that $\sum_i p_i = 1$.
The angular spread of each path is still $\sigma_\text{AS} = 2^{\circ}$.

In Figs.~\ref{fig:wrt_M_three_clusters} and~\ref{fig:wrt_SNR_three_clusters}, we show the resulting normalized MSE for the numerical simulation with three propagation paths.
We see that the gap between the fast estimator and the Toeplitz SE is much larger than in Fig.~\ref{fig:wrt_M_single_cluster}.
The fast estimator does not perform well in this scenario as the shift-invariance assumption is lost when the model contains more than one propagation path.

\begin{figure}[t]
    \centering
    \begin{tikzpicture}
\begin{axis}[
    height=\plotheight,
    width=\columnwidth,
    grid=both,
    ticks=both,
    xmin = 8, xmax = 128,
    ymin = 0.1, ymax = 0.72,
    ylabel={\small Normalized MSE },
    xlabel={\small Number of antennas at the base station},
    title={},
    yticklabel style={/pgf/number format/.cd, fixed, fixed zerofill, precision=1},
    legend style={at={(1.00,1.00)}, anchor=north east},
]
\addplot[omp,discard if not={Algorithm}{GenieOMP} ]
    table [ignore chars=",x=nAntennas,y=MSE,col sep=comma]
    {figures/figure6.csv};
\addlegendentry{\footnotesize Genie OMP $\order(M\log M)$};

\addplot[ml,discard if not={Algorithm}{CircML} ]
     table [ignore chars=",x=nAntennas,y=MSE,col sep=comma]
     {figures/figure6.csv};
\addlegendentry{\footnotesize ML~\eqref{eq:circulant_ml} $\order(M\log M)$};

\addplot[fe,discard if not={Algorithm}{FastMMSE} ]
    table [ignore chars=",x=nAntennas,y=MSE,col sep=comma]
    {figures/figure6.csv};
\addlegendentry{\footnotesize FE $\order(M\log M)$,}

\addplot[softmax,discard if not={Algorithm}{CircSoftmax} ]
    table [ignore chars=",x=nAntennas,y=MSE,col sep=comma]
    {figures/figure6.csv};
\addlegendentry{\footnotesize Softmax $\order(M\log M)$};

\addplot[relu,discard if not={Algorithm}{ToepReLU} ]
    table [ignore chars=",x=nAntennas,y=MSE,col sep=comma]
    {figures/figure6.csv};
\addlegendentry{\footnotesize ReLU $\order(M \log M)$};

\addplot[toepse,discard if not={Algorithm}{ToepMMSE} ]
    table [ignore chars=",x=nAntennas,y=MSE,col sep=comma]
    {figures/figure6.csv};
\addlegendentry{\footnotesize Toeplitz SE $\order(M^2)$};

\addplot[genie,discard if not={Algorithm}{GenieMMSE} ]
    table [ignore chars=",x=nAntennas,y=MSE,col sep=comma]
    {figures/figure6.csv};
\addlegendentry{\footnotesize Genie Aided};
\end{axis}
\end{tikzpicture}
\caption{
    MSE per antenna at an SNR of 0\,dB for estimation from a single snapshot ($T=1$). Channel model with three propagation paths (cf.~\eqref{eq:channel_model_3p}).
}
    \label{fig:wrt_M_three_clusters}
\end{figure}
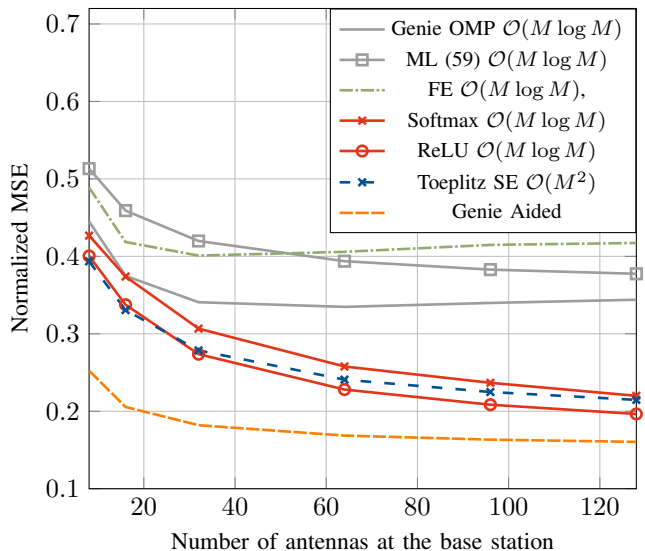

This is where the learning-based estimators shine, since they can potentially compensate for inaccurate assumptions.
We distinguish between the CNN estimator using the softmax activation function and the one with a rectified linear unit (ReLU) as activation function.
In both cases, we only show results for $\vQ=\vF_2$, since using $\vQ=\vF$ lead to consistently worse results.
We ran the hierarchical learning procedure described in Sec.~\ref{sec:learning} for $10\,000$ iterations with mini-batches of 20 samples generated from the channel model.
We also include results for the ML estimator and the genie-aided OMP algorithm discussed in Sections~\ref{sec:ml} and~\ref{sec:omp}, respectively. 
For the OMP algorithm we use a four-times oversampled DFT matrix as dictionary $\vD$.

The performance of the softmax-CNN estimator shows that it is, indeed, a good idea to use optimized variables instead of plug-in values that were derived under assumptions that fail to hold.
It is astonishing that the ReLU-CNN estimator, which has the same computational complexity as the softmax-CNN estimator, significantly outperforms all other estimators of comparable complexity.
In fact, the ReLU-CNN estimator even outperforms the more complex Toeplitz SE estimator.
This can be explained by the fact that, compared to the single-path model, the number of parameters $\vdelta$ is increased and the choice of $N=16M$ samples no longer guarantees a small gridding error.

\begin{figure}[t]
    \centering
    \begin{tikzpicture}
\begin{semilogyaxis}[
    height=\plotheight,
    width=\columnwidth,
    grid=both,
    ticks=both,
    xmin = -15, xmax = 15,
    ymin = 0.01, ymax = 15.0,
    ylabel={\small Normalized MSE },
    xlabel={\small SNR [dB]},
    title={},
    legend style={at={(1.0,1.0)}, anchor=north east},
]
\addplot[omp,discard if not={Algorithm}{GenieOMP} ]
    table [ignore chars=",x=SNR,y=MSE,col sep=comma]
    {figures/figure7.csv};
\addlegendentry{\footnotesize Genie OMP $\order(M\log M)$};

\addplot[ml,discard if not={Algorithm}{CircML} ]
     table [ignore chars=",x=SNR,y=MSE,col sep=comma]
     {figures/figure7.csv};
\addlegendentry{\footnotesize ML~\eqref{eq:circulant_ml} $\order(M\log M)$};

\addplot[fe,discard if not={Algorithm}{FastMMSE} ]
    table [ignore chars=",x=SNR,y=MSE,col sep=comma]
    {figures/figure7.csv};
\addlegendentry{\footnotesize FE $\order(M\log M)$,}

\addplot[softmax,discard if not={Algorithm}{CircSoftmax} ]
    table [ignore chars=",x=SNR,y=MSE,col sep=comma]
    {figures/figure7.csv};
\addlegendentry{\footnotesize Softmax $\order(M\log M)$};

\addplot[relu,discard if not={Algorithm}{ToepReLU} ]
    table [ignore chars=",x=SNR,y=MSE,col sep=comma]
    {figures/figure7.csv};
\addlegendentry{\footnotesize ReLU $\order(M \log M)$};

\addplot[toepse,discard if not={Algorithm}{ToepMMSE} ]
    table [ignore chars=",x=SNR,y=MSE,col sep=comma]
    {figures/figure7.csv};
\addlegendentry{\footnotesize Toeplitz SE $\order(M^2)$};

\addplot[genie,discard if not={Algorithm}{GenieMMSE} ]
    table [ignore chars=",x=SNR,y=MSE,col sep=comma]
    {figures/figure7.csv};
\addlegendentry{\footnotesize Genie Aided};
\end{semilogyaxis}
\end{tikzpicture}
    \caption{MSE per antenna for M=64 antennas and for estimation from a single snapshot ($T=1$). Channel model with three propagation paths (cf.~\eqref{eq:channel_model_3p}).}
    \label{fig:wrt_SNR_three_clusters}
\end{figure}
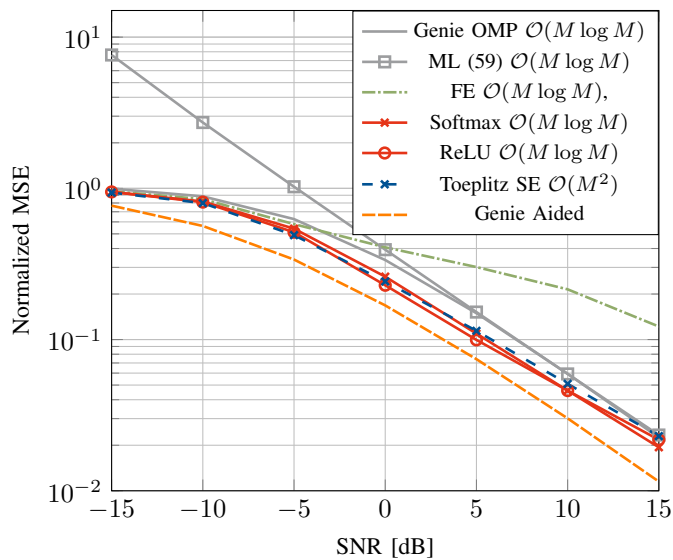

\begin{table}
\caption{Simulation parameters for Fig.~\ref{fig:wrtT}}
\label{tab:sim_parameters}
\begin{center}
\begin{tabular}{lr} \toprule
    Path-loss coefficient & \num{3.5} \\
    Log-normal shadow fading & \SI{0}{\decibel} \\
    Min. distance & \SI{1000}{m} \\
    Max. distance & \SI{1500}{m} \\ 
    SNR at max. distance & \SI{-10}{\decibel} \\ \bottomrule
\end{tabular}
\end{center}
\end{table}

Finally, we use the 3GPP urban-macro channel model as specified in~\cite{3gpp} with a single user placed at different positions in the cell.
The parameters used in the simulation are given in Table~\ref{tab:sim_parameters}.
In Fig.~\ref{fig:wrtT}, we depict the performance in terms of spectral efficiency with respect to the number of available observations.
Specifically, we use a matched filter in the uplink and evaluate the rate expression
\begin{align}\label{eq:rate}
    r = \expec \left[ \log_2 \left( 1 + \frac{\lvert \hest\he\vh \rvert^2}{\sigma^2 \lVert\hest\rVert^2} \right)\right]
\end{align}
with Monte Carlo simulations.
We assume that the SNR is the same during training and data transmission.
Note that this rate expression, which assumes perfect channel state information (CSI) at the decoder, yields an upper bound on the achievable rate, which is a simple measure for the accuracy of the estimated subspace.
Commonly used lower bounds on the achievable rate that take the imperfect CSI into account are not straightforward to apply to our system model and are, therefore, not shown.

The performance of both the ML estimator and the ReLU-CNN estimator converge towards the genie aided estimator for large $T$.
For a small to moderate number of observations, the CNN-based approach is clearly superior.
The upper bound on the OMP performance is still lower than the performance of the circulant ML estimator.

We also see that the ReLU-CNN estimator outperforms the Toeplitz SE for a high number of observations.
The reason is that we use a fixed number of samples $N=16M$, which leads to an error floor with respect to the number of observations.
In other words, for higher numbers of observations the complexity of the gridded estimator has to be increased to improve the estimation accuracy.
In contrast, the accuracy of the ReLU-CNN estimator improves just like that.

The simulation code is available online~\cite{neumann_simulation_2017}.

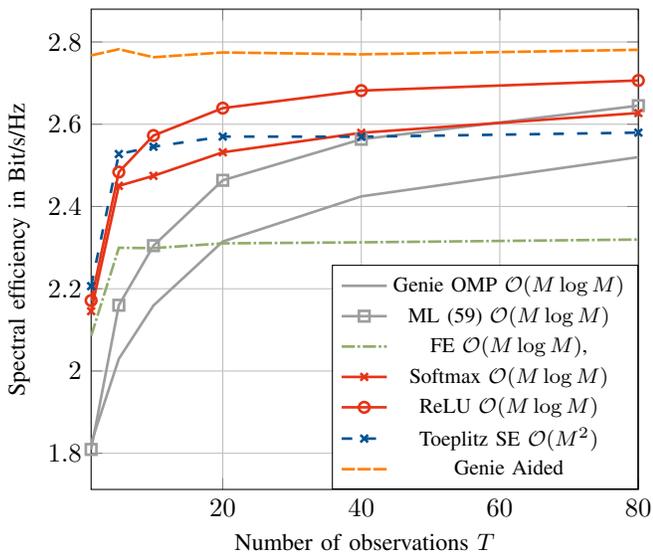
\begin{figure}[t]
    \centering
    \begin{tikzpicture}
\begin{axis}[
    height=\plotheight,
    width=\columnwidth,
    grid=both,
    ticks=both,
    xmin = 1, xmax = 80,
    ylabel={\small Spectral efficiency in Bit/s/Hz},
    xlabel={\small Number of observations $T$},
    title={},
    legend entries={
        \footnotesize Genie OMP $\order(M\log M)$,
        \footnotesize ML~\eqref{eq:circulant_ml} $\order(M\log M)$,
        \footnotesize FE $\order(M\log M)$,
        \footnotesize Softmax $\order(M\log M)$,
        \footnotesize ReLU $\order(M \log M)$,
        \footnotesize Toeplitz SE $\order(M^2)$,
        \footnotesize Genie Aided,
    },
    legend style={at={(1.0,0.0)}, anchor=south east},
]
\addplot[omp,discard if not={Algorithm}{GenieOMP} ]
    table [ignore chars=",x=nCoherence,y=rate,col sep=comma]
    {figures/figure8.csv};
\addlegendentry{\footnotesize Genie OMP $\order(M\log M)$};
\addplot[ml,discard if not={Algorithm}{CircML} ]
     table [ignore chars=",x=nCoherence,y=rate,col sep=comma]
     {figures/figure8.csv};
\addlegendentry{\footnotesize ML~\eqref{eq:circulant_ml} $\order(M\log M)$};
\addplot[fe,discard if not={Algorithm}{FastMMSE} ]
    table [ignore chars=",x=nCoherence,y=rate,col sep=comma]
    {figures/figure8.csv};
\addlegendentry{\footnotesize FE $\order(M\log M)$,}
\addplot[softmax,discard if not={Algorithm}{CircSoftmax} ]
    table [ignore chars=",x=nCoherence,y=rate,col sep=comma]
    {figures/figure8.csv};
\addlegendentry{\footnotesize Softmax $\order(M\log M)$};
\addplot[relu,discard if not={Algorithm}{ToepReLU} ]
    table [ignore chars=",x=nCoherence,y=rate,col sep=comma]
    {figures/figure8.csv};
\addlegendentry{\footnotesize ReLU $\order(M \log M)$};
\addplot[toepse,discard if not={Algorithm}{ToepMMSE} ]
    table [ignore chars=",x=nCoherence,y=rate,col sep=comma]
    {figures/figure8.csv};
\addlegendentry{\footnotesize Toeplitz SE $\order(M^2)$};
\addplot[genie,discard if not={Algorithm}{GenieMMSE} ]
    table [ignore chars=",x=nCoherence,y=rate,col sep=comma]
    {figures/figure8.csv};
\addlegendentry{\footnotesize Genie Aided};
\end{axis}
\end{tikzpicture}
    \caption{
        Spectral efficiency for $M=64$ antennas and an SNR of \SI{-10}{dB} at the cell edge. 
        The urban macro channel model specified in~\cite{3gpp} is used to generate the channels.
    }
    \label{fig:wrtT}
\end{figure}

\section{Conclusion}

We presented a novel approach to learn a low-complexity channel estimator, which is motivated by the structure of the MMSE estimator.
In contrast to other approaches, there are no model parameters which have to be fine-tuned for different channel models.
These parameters are learned from channel realizations that could be generated from a model or measured.
Despite this lack of explicit fine tuning, the proposed method outperforms state-of-the-art approaches at a very low computational cost.
Although we could consider more general NNs, e.g., by replacing convolution matrices with arbitrary matrices, our simulation results suggest that this is not worthwhile, at least as long as the 3GPP models are used.

It will be interesting to establish whether the NN estimators perform equally well for channel models in which the Toeplitz assumption is not satisfied.
In fact, recent work~\cite{gao_massive_2015} suggests that the model in~\eqref{eq:covmodel} based on the far-field assumption does not provide a perfect fit when using large arrays with lots of antennas.
However, the structure of the neural network is not required to perfectly fit the channel model, since the optimized variables can compensate for an inappropriate structure, at least partially.
The only requirement is that suitable training data for the learning procedure is available.

\appendix

\subsection{Proof of Lemma~\ref{lem:mmseestimator}}\label{app:lemma1}

We show Lemma~\ref{lem:mmseestimator} for the slightly more general
case with arbitrary full-rank noise covariance matrices $\cz$.
Let $\vS = T^{-1}\sum_{t=1}^T \vy\vy\he$ denote the sample covariance matrix.
The likelihood of $\vY$ in
\begin{equation}
\hat{\vh}_\text{MMSE}
= \frac{\int p(\vY | \vdelta) \vW_\vdelta\,p(\vdelta)d\vdelta}{\int p(\vY|\vdelta)\,p(\vdelta)d\vdelta}\vy
\end{equation}
is proportional to (we only need to consider factors with $\vdelta$, because other terms cancel out)
\begin{align}
p(\vY | \vdelta )
&\propto \prod_{t=1}^T \frac{\exp\big(-\vy_t\he (\vC_\vdelta + \cz)^{-1} \vy_t \big)}{\abs{\vC_\vdelta + \cz} } \\
&= \exp\big(-T \tr((\vC_\vdelta + \cz)^{-1}\vS) \big) \prod_{t=1}^T |(\vC_\vdelta + \cz)^{-1}|.\label{eq:app:likelihood}
\end{align}
We express $(\vC_\vdelta + \cz)^{-1}$ in terms of $\vW_\vdelta$ as
follows:
We have
\begin{align}
    \vC_\vdelta &= \vW_\vdelta(\vC_\vdelta + \cz) \\ 
    \Leftrightarrow\; \vC_\vdelta + \cz &= \vW_\vdelta(\vC_\vdelta + \cz) + \cz \\
    \Leftrightarrow\; \id &= \vW_\vdelta + \cz(\vC_\vdelta + \cz)\inv\\
    \Leftrightarrow\; \cz\inv(\id -\vW_\vdelta) &= (\vC_\vdelta + \cz)\inv.
\end{align}
If we plug this expression into the likelihood~\eqref{eq:app:likelihood}, we obtain
\begin{align}
p(\vY\vert\vdelta)
&\propto \exp\big(-T\tr(\cz^{-1} (\id-\vW_\vdelta) \vS)\big) \prod_{t=1}^T |\cz^{-1}(\id-\vW_\vdelta)|\\
&\propto \exp\big(T\tr(\cz^{-1} \vW_\vdelta \vS)\big)\prod_{t=1}^T |\id - \vW_\vdelta|\\
&=\exp\big( T\tr(\cz^{-1}\vW_\vdelta\vS) + T\log|\id - \vW_\vdelta|\big)
\end{align}
since $\cz$ does depend on $\vdelta$.
If we substitute $\cz = \sigma^2\id$ and $\widehat{\vC} = T/\sigma^2 \vS$, Lemma~\ref{lem:mmseestimator} follows.

\subsection{Uniform Linear Array}
\label{app:ula}

For a uniform linear array (ULA) with half-wavelength spacing at the base station, the steering vector is given by
\begin{equation}
\va(\theta) = \big[1, \; \exp(i\pi\sin\theta), \; \ldots, \; \exp(i\pi(M-1)\sin\theta)\big]\he.
\end{equation}
Consequently, the covariance matrix has Toeplitz structure with entries
\begin{align}
    [\vC_\vdelta]_{mn} = \int_{-\pi}^{\pi} g(\theta;\vdelta) \exp(-i\pi(m-n)\sin\theta) d\theta.
\end{align}
If we substitute $u = \pi \sin \theta$, we get
\begin{align}\label{eq:fourier_trafo}
    [\vC_\vdelta]_{mn} = \frac{1}{2\pi} \int_{-\pi}^{\pi} f(u;\vdelta) \exp(-i(m-n)u) du
\end{align}
with
\begin{equation}\label{eq:app:transformedspectrum}
    f(u;\vdelta) = 2\pi \frac{ g(\arcsin(u/\pi);\vdelta) + g( \pi - \arcsin(u/\pi);\vdelta)}{\sqrt{\pi^2 - u^2}}
\end{equation}
where we extended $g$ periodically beyond the interval $[-\pi,\pi]$.
That is, the entries of the channel covariance matrix are Fourier coefficients of the periodic spectrum $f(u;\vdelta)$.

An interesting property of the Toeplitz covariance matrices is that we can define a circulant matrix $\widetilde \vC_\vdelta$ with the eigenvalues $f(2\pi k/M;\vdelta)$, $k=0,\ldots,M-1$, such that $\widetilde \vC_\vdelta \asymp \vC_\vdelta$~\cite{gray_2006_toeplitz}.
That is, to get the elements of the circulant matrices, we approximate the integral in~\eqref{eq:fourier_trafo} by the summation
\begin{align}\label{eq:discrete_fourier_trafo}
[\widetilde\vC_\vdelta]_{mn} = \frac{1}{M} \sum_{k=0}^{M-1} f(2\pi k/M;\vdelta) e^{-i(m-n)2\pi k/M}.
\end{align}

\subsection{Uniform Rectangular Array}
\label{app:ura}

To work with a two-dimensional array, we need a three-dimensional channel model.
That is, in addition to the azimuth angle $\theta$, we also need an elevation angle $\phi$ to describe a direction of arrival.
Under the far-field assumption, the covariance matrix is given by
\begin{equation}\label{eq:covmodel3d}
    \vC_\vdelta = \int_{-\pi/2}^{\pi/2} \int_{-\pi}^{\pi} g(\theta,\phi;\vdelta)\va(\theta,\phi)\va(\theta,\phi)\he d\theta d\phi.
\end{equation}

For a uniform rectangular array (URA) with half-wavelength spacing at the base station, we have $M= M_H M_V$ antenna elements, where $M_H$ is the number of antennas in the horizontal direction and $M_V$ the number of antennas in the vertical direction. 
The correlation between the antenna element at position $(m, p)$ and the one at $(n, q)$, given the parameters $\vdelta$, is given by
\begin{align}
    &\int\displaylimits_{-\frac{\pi}{2}}^{\frac{\pi}{2}} \int\displaylimits_{-\pi}^{\pi} g(\theta,\phi;\vdelta) e^{i\pi((n-m)\sin\theta + (q-p)\cos\theta\sin\phi)} d\theta d\phi \\
    = &\int\displaylimits_{-\frac{\pi}{2}}^{\frac{\pi}{2}} \int\displaylimits_{-\frac{\pi}{2}}^{\frac{\pi}{2}} \tilde g(\theta,\phi;\vdelta) e^{i\pi((n-m)\sin\theta +(q-p)\cos\theta\sin\phi)} d\theta d\phi 
\end{align}
where 
\begin{align}
    \tilde g(\theta, \phi; \vdelta) = g(\theta,\phi;\vdelta) + g(\pi - \theta,\phi;\vdelta).
\end{align}
We can map the square $[-\pi/2, \pi/2]^2$ bijectively onto the circle with radius $\pi$ with the substitution $u = \pi\sin \theta$ and $\nu = \pi \cos\theta\sin\phi$.
The transformed integral can be written as
\begin{multline}
    \int_{-\pi}^{\pi} \int_{-\pi}^{\pi} f(u,\nu;\vdelta) e^{-i\pi((m-n)u +(p-q)\nu)} du\, d\nu 
\end{multline}
with
\begin{equation}
    f(u,\nu;\vdelta) = \begin{cases}
        \tilde f(u,\nu;\vdelta), & \text{for } u^2 + \nu^2 \leq \pi^2, \\
        0, & \text{otherwise.}
    \end{cases}
\end{equation}
The nonzero entries of the two dimensional spectrum are given by
\begin{align}
    \tilde f(u,\nu;\vdelta) = \frac{\tilde g(\arcsin(u/\pi), \arcsin(\nu/(\pi\sqrt{1-u^2}))}{\sqrt{(\pi^2 -u^2)(\pi^2 - u^2 - \nu^2)}}.
\end{align}
That is, for a URA, the entries of the channel covariance matrix are two-dimensional Fourier coefficients of the periodic spectrum $f(u,\nu;\vdelta)$.

We can use the results for the ULA case to show that the URA covariance matrix is asymptotically equivalent to a nested circulant matrix with the eigenvalues $f(2\pi m/M_H, 2\pi p/M_V; \vdelta)$ where $m=0,$ \dots, $M_H-1$ and $p = 0$, \dots, $M_V-1$.
The eigenvectors of the nested circulant matrix are given by $\vF_{M_H} \otimes \vF_{M_V}$ where $\vF_M$ denotes the $M$-dimensional DFT matrix.

To show the asymptotic equivalence, we first replace the Toeplitz structure along the horizontal direction by a circulant structure.
This yields an asymptotically equivalent matrix due to the results from~\cite{gray_2006_toeplitz}.
Second, we replace the Toeplitz structure along the vertical direction by a circulant structure to get the desired result.
Clearly, the asymptotic equivalence only holds if $M_H$ and $M_V$ both go to infinity.

\subsection{Shift Invariance}
\label{app:shift_invariance}

To get circulant matrices $\vA_\text{SE}$ in the structured estimator $\wstruct$ in~\eqref{eq:diag_opt_filt}, we need several assumptions.
First, we assume that the circulant approximation in~\eqref{eq:discrete_fourier_trafo} holds exactly, i.e., the columns $\vw_i$ of $\vA_\text{SE}$ contain uniform samples of the continuous filter
\begin{align}\label{eq:app:continuousfilter}
    w(u;\vdelta_i) = \frac{f(u; \vdelta_i)}{f(u;\vdelta_i) + \sigma^2}.
\end{align}
Next, we assume a single parameter $\delta$ and shift invariance of the spectrum, i.e., $f(u; \delta) = f(u-\delta)$ from which $w(u;\delta) = w(u-\delta)$ follows.
Finally, the prior of $\delta$ has to be uniform on the same grid that generates the samples of the $\vw_i$.

\textbf{Example.} An example that approximately fulfills these assumptions is the 3GPP spatial channel model for a ULA with only a single propagation path.
In this case, we only have one parameter for the covariance matrix: the angle of the path center $\delta$, which is uniformly distributed.
The power density function of the angle of arrival (cf.~\eqref{eq:covmodel}) is given by the Laplace density
\begin{align}\label{eq:laplace_power_density}
    g_\text{lp}(\theta;\delta) = \exp( - d_{2\pi}(\delta,\theta)/\sigma_\text{AS} )
\end{align}
where $d_{2\pi}(\theta,\delta)$ is the wrap-around distance between $\theta$ and $\delta$ and can be thought of as $|\theta-\delta|$ for most $(\theta,\delta)$ pairs.
In other words, for different $\delta$, the function $g_\text{lp}(\theta;\delta)$ is simply a shifted version of $g_\text{lp}(\theta;0)$, i.e., $g_\text{lp}(\theta; \delta) = g_\text{lp}(\theta-\delta;0)$.

\begin{figure}
    \begin{center}
        \begin{tikzpicture}
            \begin{axis}[
                    ylabel={$w_\text{lp}(u;\delta)$},
                    xlabel={$u$},
                    xtick={-3.14159, -1.5708, 0, 1.5708, 3.14159},
                    xticklabels={$-\pi$, $-\pi/2$, $0$, $\pi/2$, $\pi$},
                    xmin=-3.14159,xmax=3.14159,
                    ymin= 0, ymax=1,
                    width=1.00\columnwidth,
                    height=0.65\columnwidth,
                ]
                \foreach \x in {1,2,...,6} 
                    \addplot+[mark=none,solid] table [x index=0, y index=\x,col sep=comma]{figures/filters.csv};
                \foreach \x in {16,17,...,20} 
                    \addplot+[mark=none,solid] table [x index=0, y index=\x,col sep=comma]{figures/filters.csv};
                \foreach \x in {7,8,...,15} 
                    \addplot+[mark=none,dashed] table [x index=0, y index=\x,col sep=comma]{figures/filters.csv};
                \draw[red] (axis cs: 0,0.9) node{\tiny$\delta = 0$};
            \end{axis}
        \end{tikzpicture}
    \end{center}
    \caption{
        Functions $w_\text{lp}(u;\delta)$ for different $\delta \in [-\pi/2,\pi/2]$ sampled on a uniform grid.
        The peaks of the graphs are at $\pi \sin \delta$.
        The graphs for $\delta \in [-\pi/4,\pi/4]$ are depicted with a dashed line-style.
    }
    \label{fig:spectra}
\end{figure}
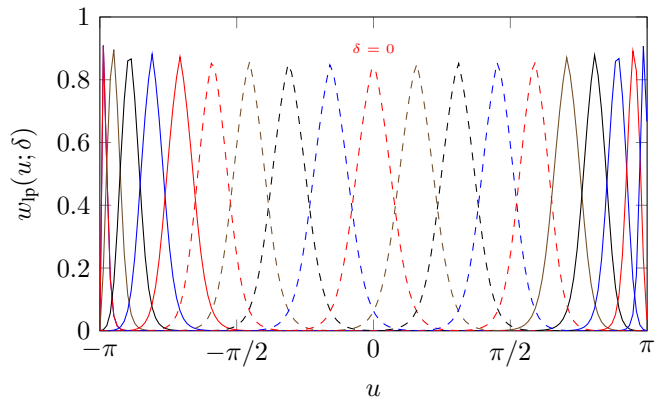

Due to the symmetry of the ULA, we can restrict the parameter $\delta$ to the interval $[-\pi/2,\pi/2]$ without loss of generality.
For angles $\delta \in [-\pi/4,\pi/4]$, i.e., if the cluster center is located at the broadside of the array, the arcsin-transform is approximately linear.
As a consequence, the correspondence~\eqref{eq:laplace_power_density} is approximately true also for the transformed spectrum $f(u;\vdelta)$ (cf.~\eqref{eq:app:transformedspectrum}) and, by virtue of~\eqref{eq:app:continuousfilter}, also the continuous filter is approximately shift-invariant.
This discussion is illustrated by Fig.~\ref{fig:spectra}, which shows the continuous filter $w_\text{lp}(\cdot;\delta)$ for different $\delta$ (the peaks are at $\pi\sin\delta$).
For $\delta\in[-\pi/4,\pi/4]$, the different filters are approximately shifted versions of the central filter, i.e.,
\begin{equation}\label{eq:shift_invariance}
w_\text{lp}(u;\delta) \approx w_\text{lp}(u-\delta; 0).
\end{equation}

For large $M$, the approximation error from using~\eqref{eq:discrete_fourier_trafo} is reduced and we can approximate the matrix $\vA_\text{SE}$ by a circular convolution with uniform samples $\vw_0$ of $w_\text{lp}(u;0)$ as convolution kernel.
We get a \emph{fast estimator} $\wshift$ by setting $\vA_\text{SE}$ in the structured estimator $\wstruct$ to
\begin{align}
    \vA_\text{SE} = \vF\he \diag(\vF \vw_0) \vF
\end{align}
where $[\vw_0]_k = w_\text{lp}(2\pi(k-1)/K; 0)$.

An analogous shift invariance can be derived for a uniform rectangular array.
In this case, we have a two-dimensional shift-invariance and, thus, two-dimensional convolutions.
For the case of distributed antennas that appeared in the examples in Sec.~\ref{sec:structured_mmse}, we do not see a straightforward way to make a similar simplification.

\bibliographystyle{IEEEtran}
\bibliography{IEEEabrv,literature}

\begin{IEEEbiography}[{\includegraphics[width=1in,height=1.25in,clip,keepaspectratio]{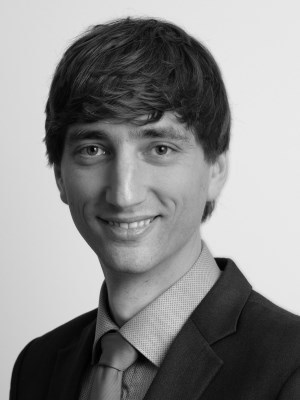}}]{David Neumann}
received the Dipl.-Ing.\ degree in electrical engineering from Technische Universit\"at M\"nchen (TUM) in 2011.
He is currently working towards the doctoral degree at the Professorship for Signal Processing at TUM.
His research interests include transceiver design for large-scale communication systems and estimation theory.
\end{IEEEbiography}%
\begin{IEEEbiography}[{\includegraphics[width=1in,height=1.25in,clip,keepaspectratio]{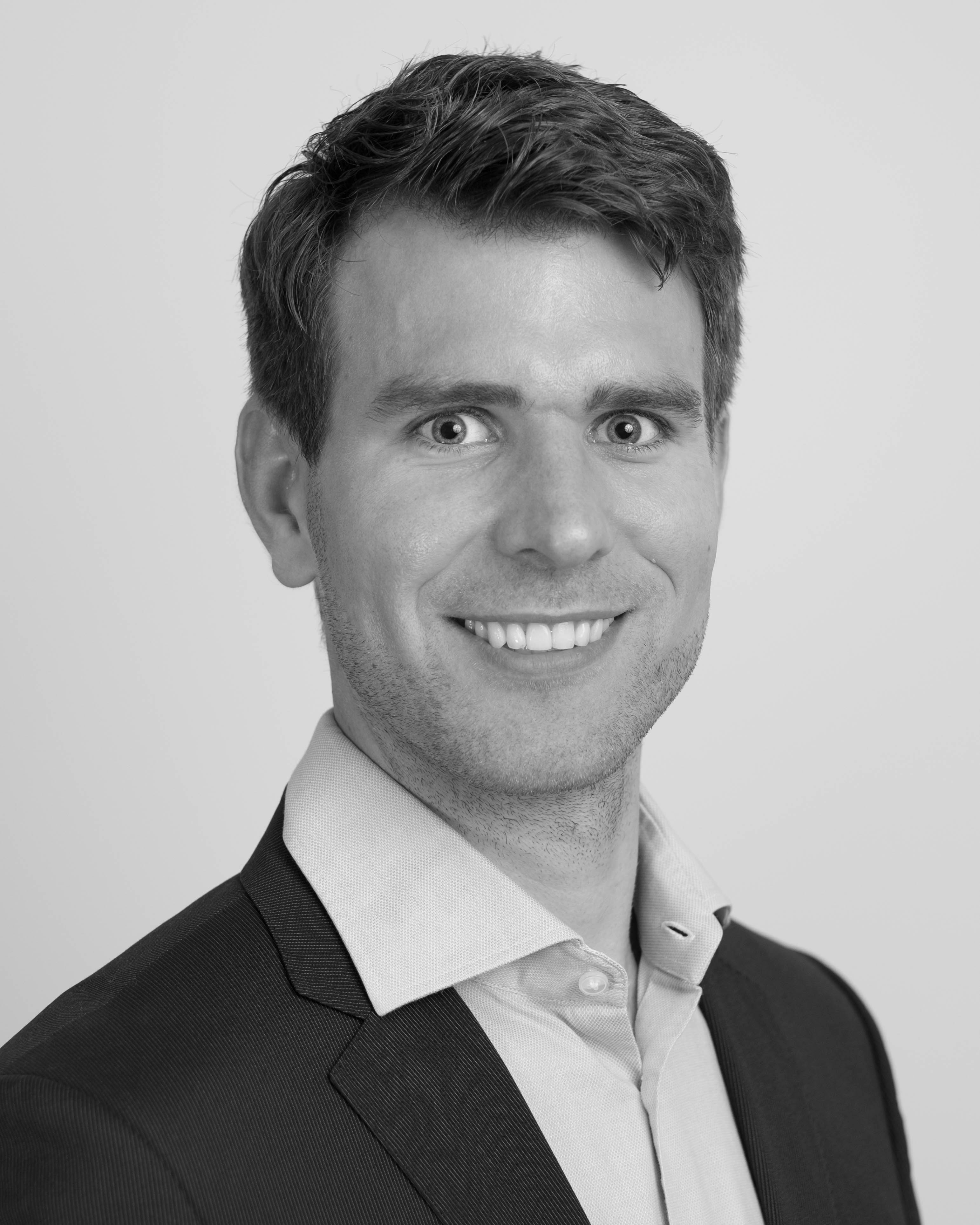}}]{Thomas Wiese}
received the Dipl.-Ing.\ degree in electrical engineering and the Dipl.-Math.\ degree in mathematics from Technische Universit\"at M\"nchen (TUM) in 2011 and 2012, respectively.
He is currently working towards the doctoral degree at the Professorship for Signal Processing at TUM.
His research interests include compressive sensing, sensor array processing, and convex optimization.
\end{IEEEbiography}%
\begin{IEEEbiography}[{\includegraphics[width=1in,height=1.25in,clip,keepaspectratio]{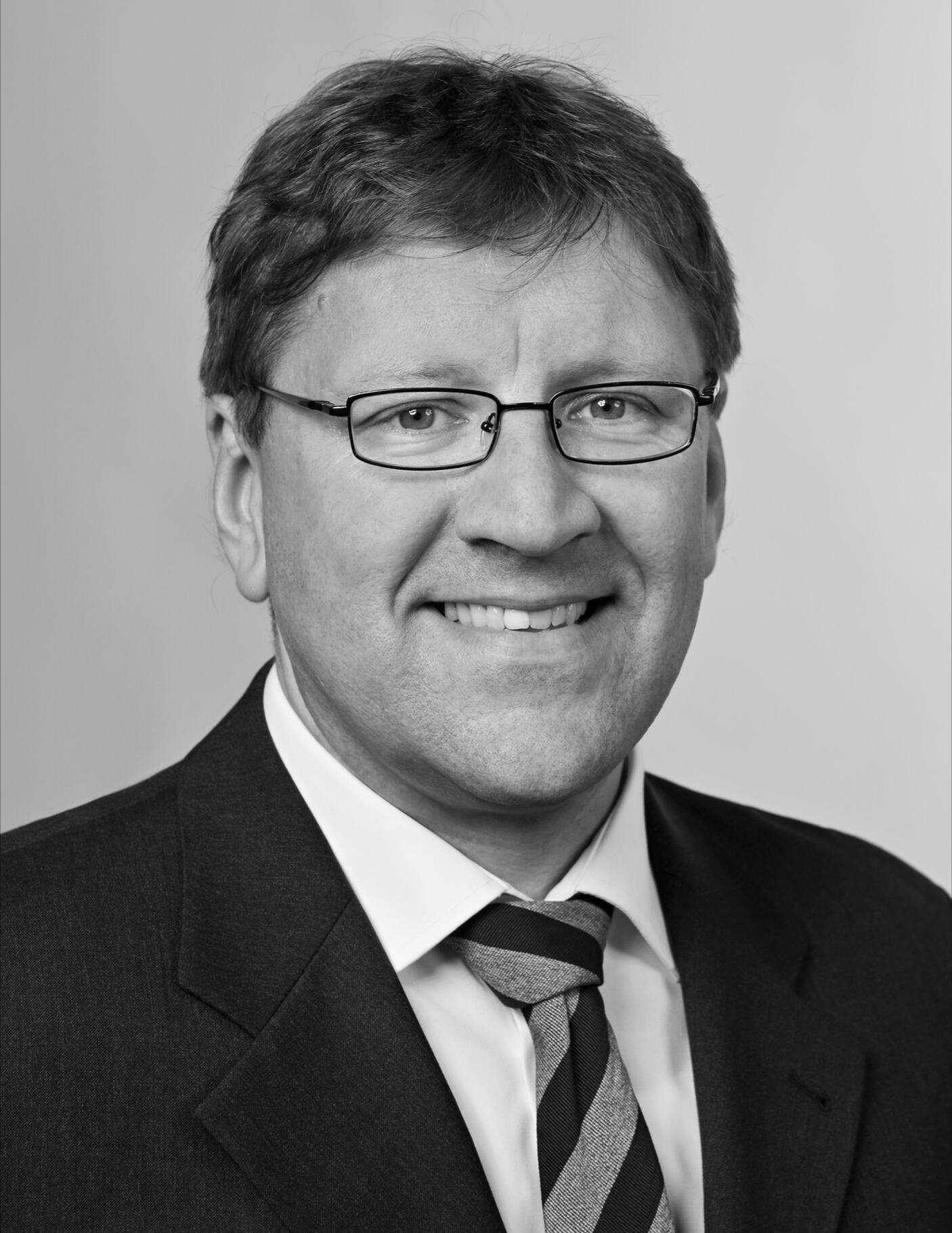}}]{Wolfgang Utschick}
(SM'06) completed several years of industrial training programs before he received the diploma in 1993 and doctoral degree in 1998 in electrical engineering with a dissertation on machine learning, both with honors, from Technische Universit{\"a}t M{\"u}nchen (TUM). Since 2002, he is Professor at TUM where he is chairing the Professorship of Signal Processing. He teaches courses on signal processing, stochastic processes, and optimization theory in the field of wireless communications, various application areas of signal processing, and power transmission systems. Since 2011, he is a regular guest professor at Singapore's new autonomous university, Singapore Institute of Technology (SIT). He holds several patents in the field of multi-antenna signal processing and has authored and coauthored a large number of technical articles in international journals and conference proceedings and has been awarded with a couple of best paper awards. He edited several books and is founder and editor of the Springer book series Foundations in Signal Processing, Communications and Networking. Dr. Utschick has been Principal Investigator in multiple research projects funded by the German Research Fund (DFG) and coordinator of the German DFG priority program Communications Over Interference Limited Networks (COIN). He is a member of the VDE and therein a member of the Expert Group 5.1 for Information and System Theory of the German Information Technology Society. He is senior member of the IEEE, where he is currently chairing the German Signal Processing Section. He has also been serving as an Associate Editor for IEEE Transactions on Signal Processing and has been member of the IEEE Signal Processing Society Technical Committee on Signal Processing for Communications and Networking. Since 2017, he serves as Dean of the Department for Electrical and Computer Engineering at TUM.
\end{IEEEbiography}
\vfill
\end{document}